\DeclareMathOperator*{\argmin}{arg\,min}
\newtheorem{theorem}{Theorem}
\newtheorem{lemma}[theorem]{Lemma}
\newtheorem{remark}[theorem]{Remark}
\newtheorem{assumption}[theorem]{Assumption}
\numberwithin{equation}{section}
\title{  Dynamic Convex Duality in Constrained Utility Maximization}
\date{}
\author{Yusong Li\footnote{Department of Mathematics, Imperial College, London SW7 2BZ, UK.  Email: y.li11@imperial.ac.uk}\; and Harry Zheng\footnote{Corresponding Author. Department of Mathematics, Imperial College, London SW7 2BZ, UK.  Email: h.zheng@imperial.ac.uk}}
\begin{document}
\maketitle



\begin{abstract}
In this paper, we study a constrained utility maximization problem following the convex duality approach. After formulating the primal and dual problems, we construct the necessary and sufficient conditions for both the primal and dual problems in terms of FBSDEs plus additional conditions. Such formulation then allows us to explicitly characterize the primal optimal control as a function of the adjoint process coming from the dual FBSDEs in a dynamic fashion and vice versa. Moreover, we also find that the optimal primal wealth process coincides with the adjoint process of the dual problem and vice versa.  Finally we solve three constrained utility maximization problems, which contrasts the simplicity of the duality approach we propose and the technical complexity of solving the primal problems directly.
\end{abstract}

\noindent\textbf{Keywords}: convex duality, primal and dual FBSDEs, utility maximization, convex portfolio constraints

\bigskip
\noindent\textbf{AMS MSC2010}: 91G80,\ 93E20,\ 49N05,\ 49N15

\section{Introduction}
One of the most commonly studied problems in mathematical economics is the optimal consumption/investment problem. Such problems have as their goal of constructing the investment strategy that maximizes the agent's expected utility of the wealth at the end of the planning horizon. Here we assume that trading strategies take values in a closed convex set which is general enough to include short selling, borrowing, and other trading restrictions, see \cite{karatzasshreve:mathfinance}. 

There has been extensive research in dynamic portfolio optimization. The stochastic control approach was first introduced in the two landmark papers of Merton \cite{merton:lifetime, merton:consumption}, which was wedded to the Hamilton-Jacobi-Bellman equation and the requirement of an underlying Markov state process. The optimal consumption/investment problem in a non-Markov setting was solved using the martingale method by, among others, Pliska \cite{pliska:optimalportfolio}, Cox and Huang \cite{coxhuang:diffusion, coxhuang:variational}, Karatzas, Lehoczky and Shreve \cite{kls:smallinvestor}. The stochastic duality theory of Bismut \cite{bismut:convexdual} was first employed to study the constrained optimal investment problem in Shreve and Xu \cite{xushreve:dualmethod} where the authors studied the problems of no-short-selling constrains with $ K=[0,\infty)^{N} $. The effectiveness of convex duality method was later adopted to tackle the more traditional incomplete market models in the works of, among others, Karatzas, Lehoczky, Shreve and Xu\cite{klsx:dual}, Pearson and He \cite{hepearson:infinite, hepearson:finite}, Cvitani$\acute{c}$ and  Karatzas \cite{cvitanickaratzas:convexdual}. The spirit of this approach is to suitably embed the constrained problem in an appropriate family of unconstrained ones and find a member of this family for which the corresponding optimal policy obeys the constrains. However, despite the evident power of this approach, it is nevertheless true that obtaining the corresponding dual problem remains a challenge as it often involves clever experimentation and subsequently show to work as desired. To bring some transparency to the dual problem, Labb\'{e} and Heunis \cite{heunislabbe:constrainedUtility} established a simple synthetic method of arriving at a dual functional, bypassing the need to formulate a fictitious market. It often happens that the dual problem is much nicer than the primal problem in the sense that it is easier to show the existence of a solution and in some cases explicitly obtain a solution to the dual problem than it is to do likewise for the primal problem.

In this paper, we follow the approach as in Labb\'{e} and Heunis \cite{heunislabbe:constrainedUtility} by first converting the original problem into a static problem in an abstract space. Then we apply convex analysis to derive its dual problem and get the specific dual stochastic control problem. Subsequently, following the approach in \cite{santacroce:fbsdesemi} and \cite{horst:fbsde} we progress to a stochastic approach to simultaneously characterise the necessary and sufficient optimality conditions for both the primal and dual problems as systems of Forward and Backward Stochastic Differential Equations (FBSDEs) coupled with static optimality conditions. Such formulation then allows us to characterize the primal optimal control as a function of the adjoint processes coming from the dual FBSDEs in a dynamic fashion and vice versa. Moreover, we also find that the optimal primal wealth process coincides with the optimal adjoint process of the dual problem and vice versa. To the best of our knowledge, this is the first time the dynamic relations of the primal and dual problems have been explicitly established for constrained utility maximization problems under a non-Markov setting. After establishing the optimality conditions and the relations for the primal and dual problems, we solve three constrained utility maximization problems with both Markov and non-Markov setups. Instead of tackling the primal problem directly, we start from the dual problem and then construct the optimal solution to the primal problem from that to the dual problem. All examples contrast the  simplicity of the duality approach we propose and the technical complexity of solving the primal problem directly.

The rest of the paper is organised as follows. In Section 2 we set up the market model and formulate the primal and dual problems following the approach in \cite{heunislabbe:constrainedUtility}. In Section 3 we state and prove the main results of necessary and sufficient optimality conditions for the primal and dual problems and their connections in a dynamic fashion. In Section 4  we give three examples  to demonstrate the effectiveness of the dynamic duality approach in solving constrained utility maximization problems. Section 5 concludes the paper.

\section{Market Model and Primal and Dual Problems}
Let $ (\Omega,\mathcal{F},\mathbb{P}) $ be a complete probability space on which is defined some $\mathbb{R}^N$-valued standard Brownian motion $ \{W(t),t\in[0,T]\} $ with $ T>0 $ denoting a fixed terminal time. Let $ \{ \mathcal{F}_t, t\in[0,T] \} $ be the standard filtration induced by $ W $, where
\begin{equation*}
\mathcal{F}_t\triangleq \sigma\{ W(s),s\in[0,t] \}\bigvee \mathcal{N}(P), t\in[0,T],
\end{equation*}
in which $ \mathcal{N}(P) $ denotes the collection of all $ \mathbb{P} $-null events in $ (\Omega,\mathcal{F},\mathbb{P}) $. We denote by $ \mathcal{F}^* $ the $ \sigma $-algebra of $ \mathcal{F}_t $ progressively measurable sets on $ \Omega\times[0,T] $. For any stochastic process $ v: \Omega\times [0,T]\rightarrow\mathbb{R}^m, \ m\in\mathbb{Z}^+$, we write $ v\in\mathcal{F}^* $ to indicate $ v $ is $ \mathcal{F}^* $ measurable.  We introduce the following notations:
\begin{align*}
&\mathcal{H}^1(0,T;\mathbb{R}^m)\triangleq \left\lbrace v:\Omega\times[0,T] \rightarrow\mathbb{R}^m \mid v\in\mathcal{F}^*,E\left[\int_0^T \|v(t)\|dt\right]<\infty \right\rbrace,\\
&\mathcal{H}^2(0,T;\mathbb{R}^m)\triangleq \left\lbrace\xi:\Omega\times[0,T] \rightarrow\mathbb{R}^m \mid \xi\in\mathcal{F}^*,E\left[\int_0^T \|\xi(t)\|^2dt\right]<\infty \right\rbrace,
\end{align*}
where $ m\in\mathbb{Z}^+ $.\\

Consider a market consisting of a bank account with price $ \lbrace S_0(t) \rbrace $ given by
\begin{equation}\label{bank_account}
dS_0(t) = r(t)S_0(t)dt, \ 0\leq t\leq T, \ S_0(0)=1,
\end{equation}
and $ N $ stocks with prices $ \lbrace S_n(t) \rbrace, \ n=1,\cdots,N $, given by 
\begin{equation}\label{stock_price}
dS_n(t) = S_n(t)\left[ b_n(t)dt+\sum_{m=1}^N\sigma_{nm}(t)dW_{m}(t) \right],\ 0\leq t\leq T, \ S_n(0)>0.
\end{equation}
Through out the paper we assume that the interest rate $ \{r(t)\} $, the appreciation rates on stocks denoted by entries of the $ \mathbb{R}^N $-valued process $ \{b(t)\} $ and the volatility process denoted by entries of the $ N\times N $ matrix $ \{\sigma(t)\} $ are uniformly bounded $ \{\mathcal{F}_t\} $-progressively measurable scalar processes on $ \Omega\times [0,T] $. We also assume that there exists a positive constant $ k$ such that 
\begin{equation*}
z^\intercal\sigma(t)\sigma^\intercal(t)z\geq k|z|^2
\end{equation*}
for all $ (z,\omega,t)\in \mathbb{R}^N\times \Omega\times [0,T] $, where $z^\intercal$ is the transpose of $z$. According to \cite[p.90 (2.4) and (2.5)]{xushreve:dualmethod}, the strong non-degeneracy condition above ensures that matrices $\sigma(t), \sigma^\intercal(t)$ are  invertible and uniformly bounded.  \\

Consider a small investor with initial wealth $ x_0>0 $ and a self-financing strategy.  Define the set of admissible portfolio strategies by
\begin{equation*}
\left.
\begin{array}{c}
\mathcal{A}:= \left\{ \pi \in  {\cal H}^2(0,T; \mathbb{R}^N): \pi(t)\in K  \textit{ for $t\in[0,T]$ a.e.}\right\},
\end{array}
\right.
\end{equation*}
where $ K \subseteq \mathbb{R}^N $ is a closed convex set with $ 0\in K $ and  $ \pi$ is a  portfolio process with  each entry $ \pi_n(t) $ defined as the fraction of the investor's total wealth put into the stock $n$ for $n=1,\ldots,N$ at time $ t $.
Given any  $ \pi\in\mathcal{A} $, the investor's total wealth $ X^{\pi}$ satisfies the SDE
\begin{equation}\label{primal_state_process}
\left\{
\begin{array}{l}
dX^{\pi}(t)=X^{\pi}(t) \{[ r(t)+\pi^\intercal(t)\sigma(t)\theta(t) ] dt + \pi^\intercal(t)\sigma(t)dW(t)\},\quad 0\leq t\leq T,
\\
X^{\pi}(0)=x_0,
\end{array}
\right.
\end{equation}
where $\theta(t) := \sigma^{-1}(t)\left[ b(t)-r(t)\textbf{1} \right]$ is
the market price of risk at time $t$ and is uniformly bounded and  $ \textbf{1}\in\mathbb{R}^N $ has all unit entries.   A pair $(X,\pi)$ is  {\it admissible} if $\pi\in {\cal A}$ and $X$ is a strong solution to the SDE (\ref{primal_state_process}) with control process $\pi$.
\begin{remark}
Here we define the nth entry of $ \pi(t) $ as the fraction of small investor's wealth invested in the stock $ n $ at time $ t $. Such set-up ensures the positivity of the wealth process $ X^{\pi} $, but surrenders the Lipschitz property of the coefficients in both $ X $ and $ \pi $. Hence, the stochastic maximum principle developed in \cite{cadenillaskaratzas:convexsmp} and \cite{peng:smp} are not directly applicable in our case.
\end{remark}
Let $ U: [0,\infty)\rightarrow\mathbb{R}$ be a given utility function that is twice continuously differentiable, strictly increasing, strictly concave and satisfies the following conditions:
\begin{align*}
U(0)\triangleq\lim_{x\searrow 0} U(x)>-\infty, \, \lim_{x\searrow 0}U'(x)=\infty,\textit{ and }  \lim_{x\rightarrow \infty}U'(x)=0.
\end{align*}
Define the value of the expected utility maximization problem as 
\begin{equation*}
V\triangleq \sup_{\pi\in\mathcal{A}} E\left[ U\left( X^{\pi}(T) \right) \right].
\end{equation*}
To avoid trivialities, we assume that 
\begin{equation*}
-\infty<V<+\infty.
\end{equation*} 
The constrained utility maximization can be written as the following stochastic optimization problem:
\begin{equation*}
\textit{Find optimal }\pi^*\in\mathcal{A} \textit{ such that } E\left[ U\left( X^{\pi^*}(T)\right)\right]=\sup_{\pi\in\mathcal{A}} E\left[ U\left( X^{\pi}(T) \right) \right]=V.
\end{equation*}
In the rest of this section, we formulate the dual problem following the approach in \cite{heunislabbe:constrainedUtility}. Given any continuous $ \{\mathcal{F}_t\} $ semimartingale process $ X $, we write $ X\in \mathbb{R}\times\mathcal{H}^1(0,T;\mathbb{R})\times\mathcal{H}^2(0,T;\mathbb{R}^N) $ if
\begin{equation*}
X(t)=X_0+\int_0^t \dot{X}(s)ds+\int_0^t \Lambda_{X}^\intercal(s)dW(s), \; 0\leq t\leq T, 
\end{equation*}
where $(X_0,\dot X, \Lambda_{X})\in \mathbb{R}\times\mathcal{H}^1(0,T;\mathbb{R})\times\mathcal{H}^2(0,T;\mathbb{R}^N)$.
Define the following sets:
\begin{align*}
\mathcal{U}(X)&\triangleq \left\{ \pi\in\mathcal{A}| \dot{X}(t)=X(t)\left[ r(t)+\pi^\intercal(t)\sigma(t)\theta(t) \right] \textit{ and } \Lambda_x(t)=X(t)\sigma^\intercal(t)\pi(t) \ a.e.\right\},\\
\mathbb{B}&\triangleq \left\{ X\in \mathbb{R}\times\mathcal{H}^1(0,T;\mathbb{R})\times\mathcal{H}^2(0,T;\mathbb{R}^N) | X(0)=x_0\textit{ and }\mathcal{U}(x)=\varnothing  \right\}.
\end{align*}
Moreover, to remove the portfolio constraints, define the penalty functions:
\begin{eqnarray*}
  l_0(x)& \triangleq& \left \{
  \begin{aligned}
    &0, && \text{if}\ x=x_0,\\
    &+\infty, && \text{otherwise,}
  \end{aligned} \right. \\
  l_T(x) &\triangleq & \left \{
  \begin{aligned}
    &-U(x), && \text{if}\ x\in(0,\infty), \\
    &+\infty, && \text{otherwise,}
  \end{aligned} \right. \\
  L(t,x,v,\xi) &\triangleq& \left \{
  \begin{aligned}
    &0, && \text{if}\ x>0,v=xr(t)+\xi^\intercal\theta(t) \textit{ and }x^{-1}[\sigma^\intercal(t)]^{-1}\xi\in K,\\
    &+\infty, && \text{otherwise.}
  \end{aligned} \right.
\end{eqnarray*} 
Hence, following \cite[Remark 3.4]{heunislabbe:constrainedUtility}, we obtain
\begin{equation*}
-V=\inf_{X\in\mathbb{B}}\Phi_{p}(X)=\Phi_{p}(\hat{X}) \textit{ for some } \hat{X}\in\mathbb{B},
\end{equation*}
where 
$$ \Phi_{p}(X)\triangleq l_0(X(0))+E\left[ l_T(X(T)) \right] + E\int_0^T L(t,X(t),\dot{X}(t),\Lambda_X(t))dt. $$
 The dual problem is formulated in terms of the following pointwise convex conjugate transforms of the three penalty functions:
\begin{align*}
m_0(y)&\triangleq\sup_{x\in\mathbb{R}}[xy-l_0(x)]=x_0y,\\
m_T(y)&\triangleq\sup_{x\in\mathbb{R}}[x(-y)-l_T(y)]\\
&= \left \{
  \begin{aligned}
    &\tilde{U}(y)\triangleq\sup_{x>0}[U(x)-xy], && \text{if}\ y\in[0,\infty),\\
    &\infty, && \text{otherwise,}
  \end{aligned} \right.\\
M(t,y,s,\gamma)&\triangleq \sup_{x,v\in\mathbb{R},\ \xi\in\mathbb{R}^N}\{ xs+vy+\xi^\intercal\gamma-L(t,x,v,\xi) \}\\
&=\left \{
  \begin{aligned}
    &0, && \text{if}\ s+yr(t)+\delta_{K}(-\sigma(t)[y\theta(t)+\gamma]) <\infty,\\
    &\infty, && \text{otherwise,}
  \end{aligned} \right.
\end{align*}
where $ \delta_{K}(\cdot) $ is the support function of the set $ -K $ defined by
\begin{equation}\label{support_function}
\delta_{K}(z)\triangleq \sup_{\pi\in K}\{ -\pi^\intercal z \}, z\in\mathbb{R}^N.
\end{equation}
The dual objective function $ \Phi_D $ is given by
\begin{equation*}
\Phi_D(Y)\triangleq m_0(Y(0))+E\left[ m_T(Y(T)) \right]+E\int_0^T M(t,Y(t),\dot{Y}(t),\Lambda_Y(t))dt,
\end{equation*}
$  \forall Y\in \mathbb{R}\times\mathcal{H}^1(0,T;\mathbb{R})\times\mathcal{H}^2(0,T;\mathbb{R}^N). $
Define the set
\begin{equation*}
\mathcal{D}\triangleq\left\lbrace v\triangleq\Omega\times[0,T]\rightarrow\mathbb{R}^N |v\in\mathcal{F}^* \textit{ and } \int_0^T \left[ \delta_{K}(v(t))+\| v(t) \|^2 \right] dt <\infty \ a.s.\right\rbrace.
\end{equation*}
Given $ (y,v)\in (0,\infty)\times\mathcal{D} $, the corresponding state process $ Y^{(y,v)} $ satisfies the SDE
\begin{equation}\label{dual_state_process}
\left\{
\begin{array}{l}
dY^{(y,v)}(t)=-Y^{(y,v)}(t)\left\lbrace [r(t)+\delta_{K}(v(t))]dt+\left[ \theta(t)+\sigma^{-1}(t)v(t) \right]^\intercal dW(t) \right\rbrace,\ 0\leq t\leq T,
\\
Y^{(y,v)}(0)=y,
\end{array}
\right.
\end{equation}
The optimal value of the dual function is given by
\begin{equation*}
\tilde{V}\triangleq\inf_{(y,v)\in(0,\infty)\times\mathcal{D}}\left\lbrace x_0y+E\left[\tilde{U}(Y^{(y,v)}(T))\right] \right\rbrace.
\end{equation*}
The  dual problem can be written as the following stochastic optimization problem:
\begin{equation*}
\textit{Find the optimal }(y^*,v^*)\in(0,\infty)\times\mathcal{D} \textit{ such that } \tilde{V}=x_0y^*+E\left[\tilde{U}(Y^{(y^*,v^*)}(T))\right].
\end{equation*}
The duality relation follows from \cite[Corollary 4.12]{heunislabbe:constrainedUtility}. In this paper, instead of applying the convex duality method of \cite{bismut:convexdual}, we use the machinery of stochastic maximum principle and BSDEs to derive the necessary and sufficient conditions of the primal and dual problems separately. After establishing the optimal conditions as two systems of FBSDEs, we explicitly characterise the primal optimal solution as functions of the adjoint process coming from the dual FBSDEs in a dynamic fashion and vice versa.

\section{Main Results}
In this section, we derive the necessary and sufficient optimality conditions for the primal and dual problems and show that the connection between the optimal solutions through their corresponding FBSDEs.\\

Given an admissible control $ \pi\in\mathcal{A} $ and a solution $ X^{\pi} $ to the SDE \eqref{primal_state_process}, the associated adjoint equation in the unknown processes $ p_1\in\mathcal{H}^2 (0,T;\mathbb{R}) $ and $ q_1\in\mathcal{H}^2 (0,T;\mathbb{R}^N) $ is the following BSDE:
\begin{equation}\label{primal_adjoint_BSDE}
\left\{
\begin{array}{l}
dp_1(t)=-\left\lbrace \left[r(t)+\pi^\intercal(t)\sigma(t)\theta(t)\right]p_1(t)+q_1^\intercal(t)\sigma^\intercal(t)\pi(t) \right\rbrace dt+q_1^\intercal(t)dW(t),\vspace{2mm}\\
p_1(T)=-U'(X^{\pi}(T)).
\end{array}
\right.
\end{equation}
\begin{assumption}\label{primal_assump}
The utility function $ U $ satisfies the following conditions
\begin{itemize}
\item[(i)] $ x\rightarrow xU'(x) $ is non-decreasing on $ (0,\infty) $.
\item[(ii)] There exists $ \gamma\in(1,\infty) $ and $ \beta\in(0,1) $ such that $ \beta U'(x)\geq U'(\gamma x) $ for all $ x\in (0,\infty) $.
\end{itemize}
Moreover, we assume that for $ \forall \pi\in\mathcal{A} $ and corresponding $ X^{\pi} $ satisfying the SDE \eqref{primal_state_process}, $ E[| U(X^{\pi}(T)) |]<\infty $ and $ E\left[ \left( U'(X^{\pi}(T))X^{\pi}(T) \right)^2 \right]<\infty $.
\end{assumption}
\begin{remark}
The above assumption corresponds to Remark 3.4.4 in \cite{karatzasshreve:mathfinance}. Firstly, under Assumption \ref{primal_assump} (i), 
\begin{itemize}
\item[(i')] For a utility function $ U $ of class $ C^2(0,\infty) $ (which is true in our set-up), the Arrow-Pratt index of relative risk aversion $ R(x)=-\frac{xU''(x)}{U'(x)} $ does not exceed 1.
\end{itemize}
Moreover, set $ y=U'(x) $ and we have $ xU'(x)=yI(y)=-y\tilde{U}'(y) $. Hence, we conclude:
\begin{itemize}
\item[(ii')] $ z\rightarrow \tilde{U}(e^{z}) $ is convex in $ \mathbb{R} $ when $ \tilde{U} $ is the convex dual of $ U $.
\end{itemize}
Finally, replacing $ x $ by $ -\tilde{U}'(y) $, we claim that Assumption \ref{primal_assump} (ii) is equivalent to $ \tilde{U}'(\beta y)\geq \gamma \tilde{U}'(y) $ for $ \forall y\in(0,\infty) $ and some $ \beta\in (0,1), \gamma\in(1,\infty) $. Iterating the above inequality we obtain
\begin{itemize}
\item[(iii')] $\forall \beta\in(0,1) \ \exists \gamma\in(1,\infty) \  s.t \ \tilde{U}'(\beta y)\geq \gamma \tilde{U}'(y) $ for $ \forall y\in(0,\infty) $.\end{itemize}
\end{remark}
\begin{lemma}\label{primal_bsde_lemma}
Let $ \hat{\pi}\in\mathcal{A} $ and $ X^{\hat{\pi}} $  be a solution to the SDE \eqref{primal_state_process}. The there exists a solution $ (\hat{p}_1,\hat{q}_1) $ to the adjoint BSDE \eqref{primal_adjoint_BSDE}.
\end{lemma}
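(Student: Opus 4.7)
The plan is to recognise that \eqref{primal_adjoint_BSDE} is a linear BSDE in $(p_1,q_1)$ with stochastic (unbounded) coefficients, and to exploit the fact that the wealth process $X^{\hat\pi}$ itself serves as the natural integrating factor. Writing $\beta(t) := r(t) + \hat\pi^\intercal(t)\sigma(t)\theta(t)$ and $\alpha(t) := \sigma^\intercal(t)\hat\pi(t)$, so that $dX^{\hat\pi} = X^{\hat\pi}[\beta\,dt + \alpha^\intercal dW]$ and the BSDE reads
\[
dp_1(t) = -\bigl[\beta(t)p_1(t) + q_1^\intercal(t)\alpha(t)\bigr]\,dt + q_1^\intercal(t)\,dW(t),
\]
a direct Ito computation shows that for any prospective solution the product $M(t) := X^{\hat\pi}(t)p_1(t)$ satisfies $dM = X^{\hat\pi}[q_1 + p_1\alpha]^\intercal dW$, i.e.\ the drift cancels precisely because of the form of the driver. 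So $M$ must be a local martingale with terminal value $M(T) = -X^{\hat\pi}(T)U'(X^{\hat\pi}(T))$.

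Accordingly, I would \emph{define}
\[
M(t) \;:=\; -E\!\left[\,X^{\hat\pi}(T)\,U'(X^{\hat\pi}(T))\;\big|\;\mathcal{F}_t\,\right], \quad t\in[0,T].
\]
By Assumption \ref{primal_assump} the terminal value lies in $L^2(\mathcal{F}_T)$, hence $M$ is a square-integrable $\mathbb{P}$-martingale; the Brownian martingale representation theorem then yields $Z \in \mathcal{H}^2(0,T;\mathbb{R}^N)$ with $M(t) = M(0) + \int_0^t Z^\intercal(s)\,dW(s)$. Setting
\[
\hat p_1(t) \;:=\; \frac{M(t)}{X^{\hat\pi}(t)},\qquad \hat q_1(t) \;:=\; \frac{Z(t)}{X^{\hat\pi}(t)} \;-\; \hat p_1(t)\,\sigma^\intercal(t)\,\hat\pi(t),
\]
which is a.s.\ well-defined by strict positivity of $X^{\hat\pi}$, an application of Ito's formula to $M \cdot (X^{\hat\pi})^{-1}$ reproduces exactly the drift $-[\beta\hat p_1 + \hat q_1^\intercal\alpha]\,dt$ and the diffusion $\hat q_1^\intercal dW$, so that $(\hat p_1,\hat q_1)$ satisfies \eqref{primal_adjoint_BSDE} with $\hat p_1(T) = M(T)/X^{\hat\pi}(T) = -U'(X^{\hat\pi}(T))$.

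The main obstacle is to verify the $\mathcal{H}^2$ integrability of $\hat p_1$ and $\hat q_1$, since $(X^{\hat\pi})^{-1}$ is not in general bounded and $\hat\pi$ is only square-integrable. For $\hat p_1 \in \mathcal{H}^2$ one needs $E\int_0^T M(t)^2/X^{\hat\pi}(t)^2\,dt < \infty$; I would obtain this by combining Doob's $L^2$ inequality for the square-integrable martingale $M$ with moment estimates for $(X^{\hat\pi})^{-1}$ read off from its explicit Dol\'eans--Dade form (exploiting uniform boundedness of $r,\theta,\sigma,\sigma^{-1}$), together with the growth control on $U'$ provided by Assumption \ref{primal_assump}(i)--(ii). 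The corresponding bound for $\hat q_1$ then follows from its defining formula by H\"older-splitting $Z/X^{\hat\pi}$ and $\hat p_1\,\sigma^\intercal\hat\pi$ and using $Z \in \mathcal{H}^2$, $\hat\pi \in \mathcal{H}^2$ and the moment estimates obtained in the first step.
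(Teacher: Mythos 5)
Your construction coincides with the paper's: the paper defines $\alpha(t)=E[-X^{\hat\pi}(T)U'(X^{\hat\pi}(T))\mid\mathcal F_t]$, invokes martingale representation, and sets $\hat p_1=\alpha/X^{\hat\pi}$ and $\hat q_1=\beta/X^{\hat\pi}-\hat p_1\sigma^\intercal\hat\pi$, which are exactly your $M$ and $Z$ formulas, followed by the same Ito verification. The only difference is that you explicitly flag and sketch the $\mathcal H^2$-integrability check for $(\hat p_1,\hat q_1)$, a point the paper passes over in silence, so your write-up is if anything slightly more careful than the original.
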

\begin{proof}
According to Assumption \ref{primal_assump}, the process defined as
\begin{equation}
\alpha(t)\triangleq E\left[ -X^{\hat{\pi}}(T)U'(X^{\hat{\pi}}(T) \big | \mathcal{F}_t \right],\ t\in [0,T]
\end{equation}
is square integrable. In addition, it is the unique solution of the BSDE
\begin{equation*}
\alpha(t)=-X^{\hat{\pi}}(T)U'(X^{\hat{\pi}}(T)) - \int_t^T \beta^\intercal(t)dW(t), \ t\in[0,T],
\end{equation*}
where $ \beta $ is a square integrable process with values in $ \mathbb{R}^N $. Applying Ito's lemma to $ \frac{\alpha(t)}{X^{\hat{\pi}}(t)} $, we have
\begin{eqnarray*}
d\dfrac{\alpha(t)}{X^{\hat{\pi}}(t)} &= &\dfrac{\beta(t)}{X^{\hat{\pi}}(t)}dW(t)-\dfrac{\alpha(t)}{X^{\hat{\pi}}(t)}\left\{ [r(t)+\hat{\pi}^\intercal(t)\sigma(t)\theta(t)]dt+\hat{\pi}^\intercal(t)\sigma(t)dW(t) -| \pi^\intercal(t)\sigma(t) |^2 dt\right\} \\
&&{}- \dfrac{\hat{\pi}^\intercal(t)\sigma(t)\beta(t)}{X^{\hat{\pi}}(t)}dt\\
&=&-\left\{ [ r(t)+\hat{\pi}^\intercal(t)\sigma(t)\theta(t)]\hat{p}_1(t) +\hat{q}_1^\intercal(t)\sigma^\intercal(t)\hat{\pi}(t)\right\}dt + \hat{q}_1^\intercal(t)dW(t),
\end{eqnarray*}
where 
\begin{equation}\label{primal_adjoint_process}
\hat{p}_1(t)\triangleq \dfrac{\alpha(t)}{X^{\hat{\pi}}(t)} \textit{ and } \hat{q}_1(t)\triangleq \dfrac{\beta(t)}{X^{\hat{\pi}}(t)}-\dfrac{\alpha(t)\sigma^\intercal(t)\hat{\pi}(t)}{X^{\hat{\pi}}(t)}.
\end{equation}
Hence, we conclude that $ (\hat{p}_1,\hat{q}_1) $ solves the adjoint BSDE \eqref{primal_adjoint_BSDE}.
\end{proof}

We now state the necessary and sufficient conditions for the optimality of primal problem.
\begin{theorem}(Primal problem and associated FBSDE)\label{primal_thm}
Let $ \hat{\pi}\in\mathcal{A} $. Then  $ \hat{\pi}$ is  optimal for the primal problem  if and only if
the solution  $ (X^{\hat{\pi}},\hat{p}_1,\hat{q}_1) $  of FBSDE 
\begin{equation}\label{primal_fbsde}
\left\{
\begin{array}{l}
dX^{\hat{\pi}}(t)=X^{\hat{\pi}}(t) \{[ r(t)+\hat{\pi}^\intercal(t)\sigma(t)\theta(t) ] dt + \hat{\pi}^\intercal(t)\sigma(t)dW(t)\},\vspace{2mm}
\\
X^{\hat{\pi}}(0)=x_0,\vspace{2mm}
\\
dp_1(t)=-\left\lbrace \left[r(t)+\hat{\pi}^\intercal(t)\sigma(t)\theta(t)\right]p_1(t)+q_1^\intercal(t)\sigma^\intercal(t)\hat{\pi}(t) \right\rbrace dt+q_1^\intercal(t)dW(t),\vspace{2mm}\\
p_1(T)=-U'(X^{\hat{\pi}}(T))
\end{array}
\right.
\end{equation}
satisfies the condition 
\begin{equation}\label{primal_cond}
-X^{\hat{\pi}}(t)\sigma(t)\left[ \hat{p}_1(t)\theta(t)+\hat{q}_1(t) \right]\in N_K(\hat{\pi}(t)) \textit{ for }\forall t\in[0,T], \ \mathbb{P}-a.s.,
\end{equation}
where $ N_K(x) $ is the normal cone to the closed convex set $ K $ at $ x \in K$, defined as
\begin{equation*}
N_K(x)\triangleq\left\lbrace y\in\mathbb{R}^N:\forall x^*\in K, y^\intercal(x^*-x)\leq 0 \right\rbrace.
\end{equation*}
\end{theorem}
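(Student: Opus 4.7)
The plan is to prove the two implications separately, with It\^o's formula applied to the products $\hat{p}_1 \cdot X^{\pi}$ (for sufficiency) and $\hat{p}_1 \cdot Y$ (for necessity, where $Y$ is a suitable variational process) serving as the bridge between the probabilistic FBSDE formulation and the usual analytic first-order optimality condition.

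For sufficiency, assume \eqref{primal_cond} and pick an arbitrary $\pi \in \mathcal{A}$. Concavity of $U$ yields
\[
E[U(X^{\pi}(T))] - E[U(X^{\hat{\pi}}(T))] \leq E\bigl[U'(X^{\hat{\pi}}(T))(X^{\pi}(T) - X^{\hat{\pi}}(T))\bigr].
\]
Applying It\^o's formula to both $\hat{p}_1 X^{\pi}$ and $\hat{p}_1 X^{\hat{\pi}}$, and using the state SDE together with the adjoint BSDE in \eqref{primal_fbsde}, all the natural cancellations occur and the surviving drift is proportional to $X^{\pi}(\pi - \hat{\pi})^{\intercal} \sigma[\hat{p}_1\theta + \hat{q}_1]$. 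Taking expectations (with a standard localisation argument for the stochastic integrals, legitimate under Assumption \ref{primal_assump}) and using $\hat{p}_1(T) = -U'(X^{\hat{\pi}}(T))$ together with the common initial wealth $x_0$ rewrites the upper bound as
\[
E\int_0^T X^{\pi}(t)(\pi(t) - \hat{\pi}(t))^{\intercal} \bigl\{-\sigma(t)[\hat{p}_1(t)\theta(t) + \hat{q}_1(t)]\bigr\} \, dt.
\]
Since $X^{\pi}(t)/X^{\hat{\pi}}(t) > 0$ almost surely, multiplying the inequality \eqref{primal_cond} by this positive scalar shows the integrand is pointwise non-positive, so $\hat{\pi}$ is optimal.

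For necessity, assume $\hat{\pi}$ is optimal and fix any $\pi \in \mathcal{A}$. For $\epsilon \in (0,1]$ set $\pi^{\epsilon} := \hat{\pi} + \epsilon(\pi - \hat{\pi})$; convexity of $K$ guarantees $\pi^{\epsilon} \in \mathcal{A}$, and the stochastic-exponential representation of $X^{\epsilon} := X^{\pi^{\epsilon}}$ shows that $Y(t) := \partial_{\epsilon} X^{\epsilon}(t)|_{\epsilon = 0}$ solves a linear SDE with zero initial condition forced by $X^{\hat{\pi}}(\pi - \hat{\pi})$. Optimality gives the first-order condition $E[U'(X^{\hat{\pi}}(T)) Y(T)] \leq 0$. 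Applying It\^o's formula to $\hat{p}_1 Y$, all terms involving $Y$ alone cancel by the very design of the adjoint BSDE, so with $\hat{p}_1(T) = -U'(X^{\hat{\pi}}(T))$ and $Y(0) = 0$ one arrives at
\[
E\int_0^T \bigl\{-X^{\hat{\pi}}(t) \sigma(t)[\hat{p}_1(t)\theta(t) + \hat{q}_1(t)]\bigr\}^{\intercal} (\pi(t) - \hat{\pi}(t)) \, dt \leq 0
\]
for every $\pi \in \mathcal{A}$. A standard measurable selection argument, substituting $\pi(t) = \hat{\pi}(t) + \mathbf{1}_A(t)(z - \hat{\pi}(t))$ for arbitrary $z \in K$ and progressively measurable $A \subseteq \Omega \times [0,T]$, then upgrades this integrated inequality to the pointwise normal-cone statement \eqref{primal_cond}.

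The hardest step is justifying the interchange of $\lim_{\epsilon \downarrow 0}$ and $E[\cdot]$ in the derivation of the first-order condition: $U'$ may blow up at $0$ and the coefficients of \eqref{primal_state_process} are not Lipschitz in $\pi$ (see Remark 2.1), so one cannot simply differentiate under the expectation. This is exactly where Assumption \ref{primal_assump} enters: iterating (ii) together with the monotonicity in (i) controls $U'(X^{\epsilon}(T))$ uniformly in small $\epsilon$, while the postulated square-integrability of $U'(X^{\pi}(T))X^{\pi}(T)$ furnishes an $L^1$ dominating random variable for the difference quotient $\epsilon^{-1}[U(X^{\epsilon}(T)) - U(X^{\hat{\pi}}(T))]$, enabling the dominated convergence theorem.
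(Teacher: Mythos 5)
Your overall architecture coincides with the paper's: sufficiency via concavity of $U$ plus It\^o's formula applied to $\hat{p}_1\bigl(X^{\hat{\pi}}-X^{\pi}\bigr)$ (you correctly note that the surviving drift carries the factor $X^{\pi}$ rather than $X^{\hat{\pi}}$ and that positivity of both wealth processes lets you swap them inside the normal-cone inequality); necessity via a convex perturbation $\pi^{\varepsilon}=\hat{\pi}+\varepsilon(\pi-\hat{\pi})$, a first-order condition, It\^o's formula on the adjoint process times the variational process, and a measurable-selection upgrade to the pointwise statement. Your variational process $Y=\partial_{\varepsilon}X^{\pi^{\varepsilon}}|_{\varepsilon=0}$ equals $X^{\hat{\pi}}H^{\rho}$ in the paper's notation, so your computation of the drift of $\hat{p}_1Y$ reproduces the paper's computation of $d(\alpha H^{\rho}_n)$ exactly.

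The one place where your argument diverges, and where it is not adequate as written, is the step you yourself single out as hardest: justifying $\lim_{\varepsilon\downarrow 0}E[\varepsilon^{-1}(U(X^{\varepsilon}(T))-U(X^{\hat{\pi}}(T)))]=E[U'(X^{\hat{\pi}}(T))Y(T)]$. You propose dominated convergence, with a dominating variable obtained by ``iterating (ii) to control $U'(X^{\varepsilon}(T))$ uniformly in small $\varepsilon$.'' This does not work pathwise: $X^{\varepsilon}(T)/X^{\hat{\pi}}(T)=\exp(\varepsilon(\cdots))$ with an unbounded random exponent, so no finite number of iterations of Assumption \ref{primal_assump}(ii) gives a bound $U'(X^{\varepsilon}(T))\le C\,U'(X^{\hat{\pi}}(T))$ valid for all $\omega$ and all small $\varepsilon$, and you never actually exhibit an integrable dominating variable. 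The paper avoids this entirely: Assumption \ref{primal_assump}(i) makes $x\mapsto U(x_0e^{x})$ concave, hence $\varepsilon\mapsto U(X^{\hat{\pi}+\varepsilon\rho_n}(T))$ is concave, so the difference quotient $\Phi_n(\varepsilon)$ is \emph{monotone decreasing} in $\varepsilon$ and bounded below by the integrable $\Phi_n(1)$; monotone convergence then gives the interchange with no domination needed. (The paper also first localizes $\rho$ by stopping times $\tau_n$ so that $H^{\rho}_n(T)$ is square-integrable and the stochastic integrals in the It\^o step are true martingales; you mention localization only in the sufficiency half, but the same device is needed in the necessity half.) With the monotone-convergence argument substituted for your dominated-convergence sketch, your proof matches the paper's.
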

\begin{proof}
Let $ \tilde{\pi}\in\mathcal{A} $ be an admissible control and $ \rho\triangleq \tilde{\pi}-\hat{\pi} $. Let $ \tau_n\triangleq T \wedge \inf\Big\{t\geq 0, \int_0^t \| \rho(s)\sigma(s) \|^2 ds \geq n \textit{ or } \int_0^t \| \rho^\intercal(s)\sigma(s)\sigma^\intercal(s)\hat{\pi}(s) \|^2 ds \geq n \Big\} $. Hence, $ \lim_{n\rightarrow \infty}\tau_n=T $ almost surely. Define $ \rho_n(t)\triangleq \rho(t)1_{\{t \leq \tau_n\}} $. Define function $ \phi_n (\varepsilon)\triangleq U\left( X^{\hat{\pi}+\varepsilon\rho_n}(T) \right) $ where $ \varepsilon\in [0,1] $. Set $ G(x)\triangleq U(x_0e^x) $ and taking derivatives, we have
\begin{align*}
&G'(x)=U'(x_0e^x)x_0e^x\geq 0,\\
&G''(x)=x_0e^x\left( U'(x_0e^x)+U'(x_0e^x)x_0e^x \right) \leq 0,
\end{align*}
by Assumption \ref{primal_assump}. Differentiating $ \phi $ on $ (0,1) $, we have
\begin{align*}
\phi'_n(\varepsilon)=&G'(\cdot)\left[ \int_0^T \left( \rho_n^\intercal(t)\sigma(t)\theta(t)-\rho_n^\intercal(t)\sigma^\intercal(t)\sigma(t)\left( \hat{\pi}(t)+\varepsilon\rho_n(t) \right) \right) dt+\int_0^T\rho_n^\intercal(t)\sigma(t)dW(t)\right].\\
\phi''_n(\varepsilon)=&G''(\cdot)\left[ \int_0^T \int_0^T \left( \rho_n^\intercal(t)\sigma(t)\theta(t)-\rho_n^\intercal(t)\sigma^\intercal(t)\sigma(t)\left( \hat{\pi}(t)+\varepsilon\rho_n(t) \right) \right) dt+\int_0^T\rho_n^\intercal(t)\sigma(t)dW(t) \right]^2\\
&-G'(\cdot)\left[\int_0^T \rho_n^\intercal(t)\sigma^\intercal(t)\sigma(t)\rho_n(t) dt\right]\leq 0.
\end{align*}
Hence we conclude that the function $ \Phi_n(\varepsilon)\triangleq\frac{\phi_n(\varepsilon)-\phi(0)}{\varepsilon} $ is a decreasing function and we have
\begin{equation}
\lim_{\varepsilon\rightarrow 0}\Phi_n(\varepsilon)=U'(X^{\hat{\pi}}(T))X^{\hat{\pi}}(T)H_n^\rho(T),
\end{equation}
where $ H_n^\rho(t)\triangleq \int_0^t \left(\rho_n^\intercal(s)\sigma(s)\theta(s)-\rho_n^\intercal(s)\sigma^\intercal(s)\sigma(s)\hat{\pi}(s) \right) ds+\int_0^t\rho_n^\intercal(s)\sigma(s)dW(s)$. Moreover, we obtain
\begin{align*}
E\left[ | U'(X^{\hat{\pi}}(T))X^{\hat{\pi}}(T)H^{\rho}_n(T) | \right]\leq E\left[ \left(U'(X^{\hat{\pi}}(T))X^{\hat{\pi}}(T)\right)^2 \right]^{\frac{1}{2}}E\left[ H^{\rho}_n(T)^2 \right]^{\frac{1}{2}}<\infty.
\end{align*}
Note that for $ \forall \varepsilon \in[0,1], \Phi_n(\varepsilon)\geq \Phi_n(1) = U(X^{\hat{\pi}+\rho_n}(T))-U(X^{\hat{\pi}}(T)) $ with $ E\left[\Phi_n(1)\right]<\infty $. Therefore the sequence $ \Phi_n(\varepsilon) $ is bounded from below. By the Monotone Convergence Theorem, we have
\begin{equation*}
\lim_{\varepsilon\rightarrow 0}\dfrac{E\left[ U(X^{\hat{\pi}+\varepsilon\rho_n}(T)) \right]-E\left[ U(X^{\hat{\pi}}(T)) \right]}{\varepsilon}=E\left[ U'(X^{\hat{\pi}}(T))X^{\hat{\pi}}(T)H^{\rho}_n(T)\right].
\end{equation*}
In addition, since $ \hat{\pi} $ is optimal, we conclude 
\begin{equation}\label{primal_nece_ineq_1}
E\left[ U'(X^{\hat{\pi}}(T))X^{\hat{\pi}}(T)H^{\rho}_n(T)\right]\leq 0.
\end{equation}
Let $ (\alpha,\beta) $ be as defined in Lemma \ref{primal_bsde_lemma} and $ (\hat{p}_2),\hat{q}_2) $ be the adjoint process corresponding to $ \hat{\pi} $. Applying Ito's lemma to $ -\alpha(t) H^\rho_n(t) $, we have
\begin{align*}
-d\alpha(t) H_n^\rho(t) = &\beta^\intercal(t)H^\rho(t)dW(t)-\alpha(t)\left( \rho_n^\intercal(t)\sigma(t)\theta(t)-\rho_n^\intercal(t)\sigma(t)\sigma^\intercal(t)\hat{\pi}(t) \right)dt\\
&-\alpha(t)\rho_n^\intercal(t)\sigma(t)dW(t)+\rho_n^\intercal(t)\sigma(t)\beta(t)dt\\
=&\left[ -\hat{p}_1(t)X^{\hat{\pi}}(t)\rho_n^\intercal(t)\sigma(t)\Big(\theta(t)-\sigma^\intercal(t)\hat{\pi}(t)\Big) \right.\\
&\left.-\rho_n^\intercal(t)\sigma(t)\left( X^{\hat{\pi}}(t)\hat{q}_1(t)+X^{\hat{\pi}}(t)\hat{p}_1(t)\sigma^\intercal(t)\hat{\pi}(t) \right)\right]dt\\
&+\left[\beta^\intercal(t)H_n^\rho(t)-\alpha(t)\rho_n^\intercal(t)\sigma(t)\right]dW(t).
\end{align*}
Rearranging the above equation, we obtain
\begin{equation}
-d\alpha(t) H_n^\rho(t)=-X^{\hat{\pi}}(t)\rho_n^\intercal(t)\sigma(t)\bigg( \hat{p}_1(t)\theta(t)+\hat{q}_1(t) \bigg)dt +\left[\beta^\intercal(t)H^\rho_n(t)-\alpha(t)\rho_n^\intercal(t)\sigma(t)\right]dW(t).
\end{equation}
Next, we prove that the local martingale $ \int_0^t (\beta^\intercal(s)H^\rho(s)-\alpha(s)\rho^\intercal(s)\sigma(s) )dW(s) $ is indeed a true martingale.
\begin{align*}
&E\left[\sup_{t\in[0,T]}| H_n^\rho(t) |^2\right]\\
&=E\left[\sup_{t\in[0,T]}\bigg|\int_0^t \left(\rho_n^\intercal(s)\sigma(s)\theta(s)-\rho_n^\intercal(s)\sigma(s)\sigma^\intercal(s)\hat{\pi}(s) \right) ds+\int_0^t\rho_n^\intercal(s)\sigma(s)dW(s)\bigg|^2\right]\\
&\leq C\left\{E\left[\sup_{t\in[0,T]}\bigg|\int_0^t\rho_n^\intercal(s)\sigma(s)dW(s)\bigg|^2\right]
+E\left[\sup_{t\in[0,T]}\bigg|\int_0^t \left(\rho_n^\intercal(s)\sigma(s)\theta(s)-\rho_n^\intercal(s)\sigma(s)\sigma^\intercal(s)\hat{\pi}(s) \right) ds\bigg|^2\right]\right\}\\
&\leq C \left\{ E\left[ \int_0^T | \rho_n^\intercal(s)\sigma(s) |^2ds \right]+E\left[ \int_0^T|\rho_n^\intercal(s)\sigma(s)|^2ds\right]+E\left[ \int_0^T| \rho_n^\intercal(s)\sigma(s)\sigma^\intercal(s)\hat{\pi}(s) |^2ds \right] \right\}\\
&< \infty,
\end{align*}
by the Burkeholder-Davis-Gundy inequality. In addition, we have
\begin{equation*}
E\left[ \int_0^T | \alpha(s)\rho_n^\intercal(s)\sigma(s) |^2 ds\right]<\infty.
\end{equation*}
Hence, \eqref{primal_nece_ineq_1} can be reduced to the following
\begin{equation}\label{primal_nece_proof_cond}
E\left[ \int_0^{\tau_n} -X^{\hat{\pi}}(t)\rho_n^\intercal(t)\sigma(t)\bigg( \hat{p}_1(t)\theta(t)+\hat{q}_1(t) \bigg)dt \right]\leq 0 \ \forall n\in\mathbb{N}.
\end{equation}
Define the following sets:
\begin{equation*}
B\triangleq \left\{ (t,\omega)\in[0,T]\times \Omega: \left( \pi^\intercal-\hat{\pi}^\intercal(t) \right)\sigma(t)\left( \hat{p}_1(t)\theta(t)+\hat{q}_1(t) \right)<0, \textit{ for }\forall \pi\in K \right\},
\end{equation*}
and, for any $ \pi\in K $,
\begin{equation*}
B^{\pi}\triangleq \left\{ (t,\omega)\in[0,T]\times \Omega: \left( \pi^\intercal-\hat{\pi}^\intercal(t) \right)\sigma(t)\left( \hat{p}_1(t)\theta(t)+\hat{q}_1(t) \right)<0 \right\}.
\end{equation*}
Obviously for each $ t\in[0,T] $, $ B_t^{\pi}\in\mathcal{F}_t $. Consider the control $ \tilde{\pi}:[0,T]\times\Omega \rightarrow K $, defined by
\[
\tilde{\pi}(t,\omega)\triangleq
\begin{cases}
\pi, & \textit{ if } (t,\omega)\in B^{\pi}\\
\hat{\pi}(t,\omega), &\textit{ otherwise.}
\end{cases}
\]
Then $ \tilde{\pi} $ is adapted and $ \exists n^*\in\mathbb{N} $ such that
\begin{equation*}
E\left[ \int_0^{\tau_n} X^{\tilde{\pi}}(t)\left( \tilde{\pi}^\intercal(t)-\hat{\pi}^\intercal(t) \right)\sigma(t)\left( \hat{p}_1(t)\theta(t)+\hat{q}_1(t) \right)dt \right]<0 \ \forall n>n^*,
\end{equation*}
contradicting \eqref{primal_nece_proof_cond}, unless $(Leb\otimes\mathbb{P})\{B^{\pi}\}=0$ for $ \forall \pi\in K $. Since $ \mathbb{R}^N $ is a separable metric space, we can find a  countable dense subset $ \{\pi_n\} $ of  $ K $. Denote by $ \hat B=\cup_{n=1}^\infty B^{\pi_n} $. Then  $(Leb\otimes\mathbb{P})\{\hat B\}=(Leb\otimes\mathbb{P})\{\cup_{n=1}^\infty B^{\pi_n}\} \leq \sum_{n=1}^\infty (Leb\otimes\mathbb{P})\{B^{\pi_n}\}=0$. Hence, we conclude that
\begin{equation*}
-X^{\hat{\pi}}(t)\sigma(t)\left[ \hat{p}_1(t)\theta(t)+\hat{q}_1(t) \right]\in N_K(\hat{\pi}(t)) \textit{ for }\forall t\in[0,T], \ \mathbb{P}-a.s.
\end{equation*}
We have proved the necessary condition. 

Now we prove the sufficient condition. 
Let $ (X^{\hat{\pi}},\hat{p}_1,\hat{q}_1) $ be a solution to the FBSDE \eqref{primal_fbsde} and satisfy condition \eqref{primal_cond}. 
Applying Ito's formula, we have
\begin{align*}
&\left( X^{\hat{\pi}}(t)-X^{\pi}(t) \right)\hat{p}_1(t)\\
=&\int_0^t \left( X^{\hat{\pi}}(s)-X^{\pi}(s) \right)\left\lbrace  -\left[ \left( r(s)+\hat{\pi}^\intercal(s)\sigma(s)\theta(s) \right)\hat{p}_1(s)+\hat{q}_1^\intercal(t)\sigma^\intercal(t)\pi(t) \right]dt+\hat{q}_1^\intercal(t)dW(t)\right\rbrace\\
&+\int_0^t \hat{p}_1^\intercal(s)\bigg\{\left[ X^{\hat{\pi}}(t)\left( r(s)+\hat{\pi}^\intercal(s)\sigma(s)\theta(s) \right)-X^{\pi}(s)\left( r(s)+\pi^\intercal(s)\sigma(s)\theta(s) \right) \right]ds\\
&+\left[ X^{\hat{\pi}}(s)\hat{\pi}^\intercal(s)\sigma(s)-X^{\hat{\pi}}(s)\pi^\intercal(s)\sigma(s) \right]dW(s)\bigg\}\\
&+\int_0^t \left[ X^{\hat{\pi}}(s)\hat{\pi}^\intercal(s)\sigma(s)-X^{\pi}(s)\pi^\intercal(s)\sigma(s) \right]\hat{q}_1(s)ds.
\end{align*}
Rearranging the above equation, we have
\begin{align*}
&\left( X^{\hat{\pi}}(t)-X^{\pi}(t) \right)\hat{p}_1(t) \\
=&\int_0^t\left(\hat{\pi}^\intercal(s)-\pi^\intercal(s)\right) X^{\hat{\pi}}(s)\sigma(s)\left[ \hat{p}_1(s)\theta(s)+\hat{q}_1(s) \right] ds\\
&+\int_0^t \left[\left( X^{\hat{\pi}}(s)-X^\pi(s) \right) q^\intercal(s)+X^{\hat{\pi}}(s)\left( \hat{\pi}^\intercal(t)-\pi^\intercal(t) \right)\sigma(s) \right]dW(s).
\end{align*}
Hence, by Condition \eqref{primal_cond} and the definition of normal cone, taking expectation of the above, we have
\begin{equation*}
E\left[ \left( X^{\hat{\pi}}(T)-X^{\pi}(T) \right)\hat{p}_1(T) \right]\leq 0.
\end{equation*}
Combining with concavity of $ U $ gives us 
\begin{align*}
&E\left[ U\left(X^{\pi}(T)\right)-U\left(X^{\hat{\pi}}(T)\right) \right] \leq  E\left[ \left( X^{\pi}(T)-X^{\hat{\pi}}(T) \right)U'\left(X^{\hat{\pi}}(T)\right)\right] \\
& =E\left[ \left( X^{\hat{\pi}}(T)-X^{\pi}(T) \right)\hat{p}_1(T) \right]\leq 0.
\end{align*}
Hence $ \hat{\pi} $ is indeed an optimal control.
\end{proof}

Next we address the dual problem. To establish the existence of an optimal solution, we impose the following condition:
\begin{assumption}\label{existence_condition} (\cite[Condition 4.14]{heunislabbe:constrainedUtility})
For any $ (y,v)\in (0,\infty)\times\mathcal{D} $, we have $ E\left[ \tilde{U}\left( Y^{(y,v)}(T) \right)^2 \right]<\infty $.
\end{assumption}
According to \cite[Proposition 4.15]{heunislabbe:constrainedUtility}, there exists some $ (\hat{y},\hat{v})\in(0,\infty)\times \mathcal{D}$ such that $ \tilde{V}=x_0\hat{y}+E\left[ \tilde{U}\left( Y^{(\hat{y},\hat{v})}(T) \right) \right] $. Given admissible control $ (\hat{y},\hat{v})\in(0,\infty)\times\mathcal{D} $ with the state process $ Y^{(y,v)} $ that solves the SDE \eqref{dual_state_process} and $ E\left[ \tilde{U}\left( Y^{(y,v)}(T) \right)^2 \right]<\infty $, the associated adjoint equation for the dual problem is the following linear BSDE in the unknown processes $ p_2\in \mathcal{H}^2(0,T;\mathbb{R}) $ and $ q_2\in\mathcal{H}^2(0,T;\mathbb{R}^N) $:
\begin{equation}\label{dual_adjoint_BSDE}
\left\{
\begin{array}{l}
dp_2(t)=\left\lbrace \left[r(t)+\delta_K(v(t))\right]^\intercal p_2(t)+q_1^\intercal(t)\left[ \theta(t)+\sigma^{-1}(t)v(t) \right] \right\rbrace dt+q_2^\intercal(t)dW(t),\vspace{2mm}\\
p_2(T)=-\tilde{U}'(Y^{(y,v)}(T)).
\end{array}
\right.
\end{equation}
\begin{lemma}\label{dual_BSDE_existence_lemma}
Let $ (y,v) \in(0,\infty)\times\mathcal{D}$ and $ Y^{(y,v)} $ be the corresponding state process satisfying the SDE \eqref{dual_state_process} with $ E\left[ \tilde{U}\left( Y^{(y,v)}(T) \right)^2 \right]<\infty $. Then the random variable $ Y^{(y,v)}(T)\tilde{U}'(Y^{(y,v)}(T)) $ is square integrable and there exists a solution to the adjont BSDE \eqref{dual_adjoint_BSDE}.
\end{lemma}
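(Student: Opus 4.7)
The plan is to first derive the square integrability of $Y^{(y,v)}(T)\tilde U'(Y^{(y,v)}(T))$ from Assumption \ref{existence_condition}, and then construct the BSDE solution by the martingale-plus-It\^o device already used in the proof of Lemma \ref{primal_bsde_lemma}.

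For the integrability, I would exploit the convexity of $\tilde U$, which is decreasing with $\tilde U'(y)=-I(y)\leq 0$. Fix any $\lambda\in(0,1)$; the supporting-hyperplane inequality at $y$ gives
\[
\tilde U(\lambda y)-\tilde U(y)\geq\tilde U'(y)(\lambda y-y)=(1-\lambda)yI(y),
\]
so $0\leq yI(y)\leq (1-\lambda)^{-1}[\tilde U(\lambda y)-\tilde U(y)]$, and therefore
\[
(yI(y))^{2}\leq\frac{2}{(1-\lambda)^{2}}\bigl[\tilde U(\lambda y)^{2}+\tilde U(y)^{2}\bigr].
\]
The key observation is that the SDE \eqref{dual_state_process} is linear in $Y$, so $\lambda Y^{(y,v)}=Y^{(\lambda y,v)}$ pathwise, and $(\lambda y,v)\in(0,\infty)\times\mathcal D$. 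Hence $E[\tilde U(\lambda Y^{(y,v)}(T))^{2}]=E[\tilde U(Y^{(\lambda y,v)}(T))^{2}]<\infty$ by Assumption \ref{existence_condition} applied to this second pair. Combined with the standing hypothesis $E[\tilde U(Y^{(y,v)}(T))^{2}]<\infty$, this yields $E[(Y^{(y,v)}(T)\tilde U'(Y^{(y,v)}(T)))^{2}]<\infty$.

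With that in hand, set $\alpha(t)\triangleq E[-Y^{(y,v)}(T)\tilde U'(Y^{(y,v)}(T))\mid\mathcal F_{t}]$, a square integrable martingale. The martingale representation theorem yields $\beta\in\mathcal H^{2}(0,T;\mathbb R^{N})$ with $d\alpha(t)=\beta^{\intercal}(t)dW(t)$. Defining
\[
\hat p_{2}(t)\triangleq\frac{\alpha(t)}{Y^{(y,v)}(t)},\qquad \hat q_{2}(t)\triangleq\frac{\beta(t)}{Y^{(y,v)}(t)}+\hat p_{2}(t)[\theta(t)+\sigma^{-1}(t)v(t)],
\]
and applying It\^o's formula to $\alpha/Y^{(y,v)}$ (valid since $Y^{(y,v)}>0$, with $d(1/Y^{(y,v)})$ obtained directly from \eqref{dual_state_process}), the drift and diffusion coefficients collapse exactly to those in \eqref{dual_adjoint_BSDE}, and the terminal condition $\hat p_{2}(T)=-\tilde U'(Y^{(y,v)}(T))$ is automatic. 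This mirrors the computation in the proof of Lemma \ref{primal_bsde_lemma}, with the sign changes coming from the $-Y[\cdots]dt-Y[\cdots]^{\intercal}dW$ form of \eqref{dual_state_process}.

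The main technical obstacle is the first step: deriving square integrability of $Y\tilde U'(Y)$ from the one-sided hypothesis on $\tilde U(Y)^{2}$. The convexity inequality above is clean, but only closes once one notices that the linearity of \eqref{dual_state_process} turns the scaling $y\mapsto\lambda y$ into another admissible control pair, so that Assumption \ref{existence_condition} can be invoked a second time on the rescaled state process; the growth conditions of Assumption \ref{primal_assump} are not needed here.
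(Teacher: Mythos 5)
Your proof is correct, and while the second half (conditional-expectation martingale, martingale representation, It\^o applied to $\alpha/Y^{(y,v)}$) is essentially identical to the paper's construction, your derivation of the square integrability of $Y^{(y,v)}(T)\tilde U'(Y^{(y,v)}(T))$ takes a genuinely different route. The paper follows Karatzas--Shreve: it writes $\tilde U(\eta)-\tilde U(\infty)\geq\int_\eta^{\eta/\beta}I(u)\,du\geq\frac{1-\beta}{\beta\gamma}\eta I(\eta)$, which requires both Assumption \ref{primal_assump}(ii) (to bound $I(\eta/\beta)$ from below by $I(\eta)/\gamma$) and the finiteness of $\tilde U(\infty)=U(0)$; the paper even adds a remark acknowledging that the latter is only a sufficient condition, since it fails for log utility. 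Your argument replaces both ingredients: the supporting-hyperplane inequality $\tilde U(\lambda y)-\tilde U(y)\geq(1-\lambda)yI(y)$ uses only convexity of $\tilde U$, and the observation that the linear dual SDE \eqref{dual_state_process} satisfies $\lambda Y^{(y,v)}=Y^{(\lambda y,v)}$ lets you invoke Assumption \ref{existence_condition} a second time on the admissible pair $(\lambda y,v)$ to control $E[\tilde U(\lambda Y^{(y,v)}(T))^2]$. This buys a cleaner and slightly more general statement --- no appeal to Assumption \ref{primal_assump}(ii) and no need for $U(0)>-\infty$, so log utility is covered directly rather than via the paper's ad hoc remark --- at the mild cost of using Assumption \ref{existence_condition} for a rescaled control pair rather than only the single pair named in the lemma's hypothesis (which is harmless, since that assumption is a standing one quantified over all of $(0,\infty)\times\mathcal D$, and the paper's own proof invokes it too).
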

\begin{proof}
According to Assumption \ref{existence_condition}, we have $ E\left[ \tilde{U}\left( Y^{(\hat{y},\hat{v})}(T) \right)^2 \right]<\infty. $  Following similar arguments as in \cite[page 290]{karatzasshreve:mathfinance} we have that since $ \tilde{U} $ is a decreasing function
\begin{align*}
\tilde{U}(\eta)-\tilde{U}(\infty)&\geq \tilde{U}(\eta)-\tilde{U}(\dfrac{\eta}{\beta})\\
&=\int_\eta^{\frac{\eta}{\beta}} I(u)du\\
&\geq \left( \dfrac{\eta}{\beta}-\eta \right)I\left(\dfrac{\eta}{\beta} \right)\\
&\geq \dfrac{1-\beta}{\beta\gamma}\eta I(\eta),
\end{align*}
for $ 0<\eta<\infty $, where $ \beta\in(0,1) $ and $ \gamma\in (1,\infty) $ are as in Condition \ref{existence_condition}. Since $\tilde U(\infty)=U(0)$ is finite, we conclude that the random variable $ Y^{(\hat{y},\hat{v})}(T)\tilde{U}'(Y^{(\hat{y},\hat{v})}(T)) $ is square integrable.  Define the process
\begin{equation*}
\phi(t)\triangleq E\left[ -Y^{(\hat{y},\hat{v})}(T)\tilde{U}'(Y^{(\hat{y},\hat{v})}(T)) \bigg | \mathcal{F}_t \right], \ t\in [0,T].
\end{equation*}
By the martingale representation theorem, it is the unique solution to the BSDE
\begin{equation*}
\phi(t)=-Y^{(\hat{y},\hat{v})}(T)\tilde{U}'(Y^{(\hat{y},\hat{v})}(T))-\int_t^T \varphi^\intercal(s)dW(s),
\end{equation*}
where $ \varphi $ is a square integrable process with values in $ \mathbb{R}^N $. Applying Ito's formula to $\frac{\phi(t)}{Y^{(\hat{y},\hat{v})}(t)}$, we have
\begin{align*}
d\dfrac{\phi(t)}{Y^{(\hat{y},\hat{v})}(t)} =& \left\{\frac{\phi(t)}{Y^{(\hat{y},\hat{v})}(t)}\left[ r(t)+\delta_K(\hat{v}(t))+| \theta(t)+\sigma^{-1}(t)\hat{v}(t) |^2 \right]+\frac{\varphi(t)}{Y^{(\hat{y},\hat{v})}(t)}[\theta(t)+\sigma(t)^{-1}\hat{v}(t)]\right\}dt\\
&+\left\{ \frac{\phi(t)}{Y^{(\hat{y},\hat{v})}(t)}\left[ \theta(t)+\sigma(t)^{-1}\hat{v}(t) \right]^\intercal+\frac{\varphi^\intercal(t)}{Y^{(\hat{y},\hat{v})}(t)} \right\}dW(t).
\end{align*}
Rearranging the above equation, we have
\begin{equation*}
d\hat{p}_2(t)=\left\{ \left[ r(t)+\delta_K(\hat{v}(t)) \right]^\intercal\hat{p}_2(t)+\hat{q}_2(t)\left[ \theta(t)+\sigma(t)^{-1}\hat{v}(t) \right]^\intercal \right\}dt+\hat{q}_2^\intercal(t)dW(t),
\end{equation*}
where $ (\hat{p}_2,\hat{q}_2) $ are defined as
\begin{equation*}
\hat{p}_2(t)\triangleq \dfrac{\phi(t)}{Y^{(\hat{y},\hat{v})}(t)} \textit{ and } \hat{q}_2(t)\triangleq\hat{p}_2(t)\left[ \theta(t)+\sigma(t)^{-1}\hat{v}(t) \right]^\intercal+\dfrac{\varphi^\intercal(t)}{Y^{(\hat{y},\hat{v})}(t)}.
\end{equation*}
Hence, we conclude that $ (\hat{p}_2,\hat{q}_2) $ solves the BSDE \eqref{dual_adjoint_BSDE}.
\end{proof}

\begin{remark} Note that if $U(x)=\ln x$ then $\tilde U(y)=-\ln y -1$. We have  $ Y^{(\hat{y},\hat{v})}(T)\tilde{U}'(Y^{(\hat{y},\hat{v})}(T))\equiv -1$, obviously square integrable. The conclusion of Lemma \ref{dual_BSDE_existence_lemma} holds. However, in this case, $U(0) = -\infty$, not finite. So the condition $U(0)$ being finite is only a sufficient condition for Lemma \ref{dual_BSDE_existence_lemma}, but not a necessary condition. We can apply all the results in the paper to log utility.
\end{remark}

We now state the necessary and sufficient conditions of optimality of the dual problem.
\begin{theorem}\label{dual_thm}(Dual problem and associated FBSDE)
Let $ (\hat{y},\hat{v})\in(0,\infty)\times\mathcal{D} $. Then $ (\hat{y},\hat{v})$ is optimal for the dual problem  if and only if  the solution  $ (Y^{(\hat{y},\hat{v})},\hat{p}_2,\hat{q}_2) $ of FBSDE
\begin{equation}\label{dual_FBSDE}
\left\{
\begin{array}{l}
dY^{(\hat{y},\hat{v})}(t)=-Y^{(\hat{y},\hat{v})}(t)\left\lbrace [r(t)+\delta_K(\hat{v}(t))]dt+[\theta(t)+\sigma^{-1}(t)\hat{v}(t)]^\intercal dW(t)\right\rbrace,\vspace{2mm}\\
Y^{(\hat{y},\hat{v})}(0)=\hat{y},\vspace{2mm}\\
dp_2(t)=\left\lbrace \left[r(t)+\delta_K(v(t))\right]^\intercal p_2(t)+q_2^\intercal(t)\left[ \theta(t)+\sigma^{-1}(t)v(t) \right] \right\rbrace dt+q_2^\intercal(t)dW(t),\vspace{2mm}\\
p_2(T)=-\tilde{U}'(Y^{(y,v)}(T))
\end{array}
\right.
\end{equation}
satisfies the following conditions
\begin{equation}\label{dual_conditions}
\left\{
\begin{array}{l}
\hat{p}_2(0)=x_0,\\
\hat{p}_2(t)^{-1}\left[\sigma^\intercal(t)\right]^{-1}\hat{q}_2(t)\in K,\\
\hat{p}_2(t)\delta_K(\hat{v}(t))+\hat{q}'_2(t)\sigma^{-1}(t)\hat{v}(t)=0, \textit{ for }\forall t\in [0,T]\ \mathbb{P}-a.s.
\end{array}
\right.
\end{equation}
\end{theorem}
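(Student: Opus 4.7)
The plan mirrors the structure of Theorem \ref{primal_thm}, adapted to the dual problem's two-part control $(y, v)$ and the nonsmooth support function $\delta_K$ appearing in the state equation \eqref{dual_state_process}. A useful preliminary observation is that the dual objective $\Phi_D$ is convex in $(y, v)$: since $v \mapsto \delta_K(v) + \tfrac12|\theta + \sigma^{-1}v|^2$ is convex and the stochastic integral is linear in $v$, the explicit form of $Y^{(y,v)}$ shows that $(y,v) \mapsto \ln Y^{(y,v)}(T)$ is concave; combined with the convex decreasing property of $z \mapsto \tilde{U}(e^z)$ from the Remark after Lemma \ref{primal_bsde_lemma}, this makes $(y,v) \mapsto \tilde{U}(Y^{(y,v)}(T))$ convex, and hence $\Phi_D$ convex. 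This convexity underlies all the variational arguments below.

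For sufficiency, fix an arbitrary $(y,v) \in (0,\infty) \times \mathcal{D}$ and apply convexity of $\tilde{U}$ to obtain
\begin{equation*}
E[\tilde{U}(Y^{(y,v)}(T))] - E[\tilde{U}(Y^{(\hat{y},\hat{v})}(T))] \geq -E[\hat{p}_2(T)(Y^{(y,v)}(T) - Y^{(\hat{y},\hat{v})}(T))].
\end{equation*}
From the construction in Lemma \ref{dual_BSDE_existence_lemma}, $\hat{p}_2(t)Y^{(\hat{y},\hat{v})}(t) = \phi(t)$ is a martingale, so $E[\hat{p}_2(T)Y^{(\hat{y},\hat{v})}(T)] = \hat{p}_2(0)\hat{y}$. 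Applying Ito's formula to $\hat{p}_2(t)Y^{(y,v)}(t)$ using \eqref{dual_adjoint_BSDE} and \eqref{dual_state_process}, with a stopping-time localization to turn the resulting local martingale into a true martingale exactly as in Theorem \ref{primal_thm}, collecting the $dt$-terms, and setting $\pi^*(t) := \hat{p}_2(t)^{-1}[\sigma^\intercal(t)]^{-1}\hat{q}_2(t)$, one arrives at
\begin{equation*}
\Phi_D(y,v) - \Phi_D(\hat{y},\hat{v}) \geq (x_0 - \hat{p}_2(0))(y - \hat{y}) + E\!\int_0^T \hat{p}_2(t)Y^{(y,v)}(t)\bigl\{[\delta_K(v) + \pi^{*\intercal}v] - [\delta_K(\hat{v}) + \pi^{*\intercal}\hat{v}]\bigr\}\,dt.
\end{equation*}
Under \eqref{dual_conditions}, $\hat{p}_2(0) = x_0$ kills the first term, $\pi^*(t) \in K$ makes $\delta_K(v) + \pi^{*\intercal}v \geq 0$ for every $v$ by definition of the support function, and the third condition says $\delta_K(\hat{v}) + \pi^{*\intercal}\hat{v} = 0$; hence the right-hand side is nonnegative and $(\hat{y}, \hat{v})$ is optimal.

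For necessity, I would split into two variations. First, fix $v = \hat{v}$ and perturb $y$: linearity of \eqref{dual_state_process} gives $Y^{(\hat{y}+\varepsilon,\hat{v})} = (1 + \varepsilon/\hat{y}) Y^{(\hat{y},\hat{v})}$, so differentiating $\Phi_D(\hat{y}+\varepsilon, \hat{v})$ at $\varepsilon = 0$ and using the martingale identity produces $x_0 - \hat{p}_2(0)$, which must vanish since $\hat{y} > 0$ lies in the interior, giving $\hat{p}_2(0) = x_0$. Second, fix $y = \hat{y}$ and perturb $v$: for arbitrary $\tilde{v} \in \mathcal{D}$, introduce stopping times $\tau_n$ as in Theorem \ref{primal_thm} so that the localized perturbation $\rho_n := (\tilde{v} - \hat{v})\mathbf{1}_{\{t \leq \tau_n\}}$ yields $v^\varepsilon := \hat{v} + \varepsilon \rho_n \in \mathcal{D}$ for $\varepsilon \in [0,1]$ (convexity of $\delta_K$ takes care of this). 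Since $\varepsilon \mapsto \Phi_D(\hat{y}, v^\varepsilon)$ is convex on $[0,1]$ and minimized at $\varepsilon = 0$, the difference quotient is nondecreasing and nonnegative; passing $\varepsilon \to 0^+$ using the monotone-convergence-type argument from Theorem \ref{primal_thm} yields
\begin{equation*}
0 \leq E\!\int_0^{\tau_n} Y^{(\hat{y},\hat{v})}(t)\bigl\{\hat{p}_2(t)[\delta_K(\tilde{v}) - \delta_K(\hat{v})] + \hat{q}_2^\intercal(t)\sigma^{-1}(t)(\tilde{v} - \hat{v})\bigr\}\,dt
\end{equation*}
for every $\tilde{v} \in \mathcal{D}$ and every $n$. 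Running the countable-dense-subset argument of Theorem \ref{primal_thm} on $\{v \in \mathbb{R}^N : \delta_K(v) < \infty\}$ shows that for almost every $(t, \omega)$, $\hat{v}(t)$ minimizes $v \mapsto \hat{p}_2(t)\delta_K(v) + \hat{q}_2^\intercal(t)\sigma^{-1}(t) v$; since $\hat{p}_2 > 0$, this infimum is finite only when $\hat{p}_2^{-1}(t)[\sigma^\intercal(t)]^{-1}\hat{q}_2(t) \in K$ (otherwise the expression can be driven to $-\infty$ by choosing $v$ in a suitable direction), giving condition $2$; and evaluating at $v = 0$ shows the infimum then equals $0$, which forces $\hat{p}_2(t)\delta_K(\hat{v}(t)) + \hat{q}_2^\intercal(t)\sigma^{-1}(t)\hat{v}(t) = 0$ (condition $3$).

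The main obstacle is the second variation in the necessity direction: because $\delta_K$ is convex but generally nonsmooth, one cannot differentiate through it by an elementary chain rule. The rescue is precisely the convexity structure, which makes the difference quotient of $\varepsilon \mapsto \Phi_D(\hat{y}, v^\varepsilon)$ monotone in $\varepsilon$; combined with the stopping-time localization to control stochastic integrals and a monotone-convergence argument parallel to the treatment of the nonlinear coefficient $X^\pi$ in Theorem \ref{primal_thm}, this extracts the desired variational inequality in the limit $\varepsilon \to 0^+$.
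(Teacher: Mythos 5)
Your proposal is correct and follows essentially the same route as the paper's proof: sufficiency via convexity of $\tilde{U}$ plus It\^o's formula on the pairing of $\hat{p}_2$ with the perturbed state process, and necessity via a scaling perturbation in $y$ for the first condition and a localized convex perturbation in $v$ (monotone difference quotients, reverse Fatou, It\^o on $\phi\tilde{H}$) for the second and third. The only cosmetic differences are that you bound the directional derivative of $\delta_K$ by $\delta_K(\tilde{v})-\delta_K(\hat{v})$ rather than by $\delta_K(\tilde{v}-\hat{v})$, and you replace the paper's appeal to the measurable-selection lemma of Karatzas and Shreve with a dense-subset plus separating-hyperplane argument for membership of $\hat{p}_2^{-1}[\sigma^\intercal]^{-1}\hat{q}_2$ in $K$.
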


\begin{proof}
Let $ (\hat{y},\hat{v}) $ be an optimal control of the dual problem and $ Y^{(\hat{y},\hat{v})} $ be the corresponding state process. Define function
$h(\xi)\triangleq x_0\xi\hat{y}+E\left[ \tilde{U}\left(\xi Y^{(\hat{y},\hat{v})}(T)\right) \right]$ , and $ \inf_{\xi\in (0,\infty)}h(\xi) = h(1) $. Then following the argument in \cite[Lemma 11.7, page 725]{klsx:dual} by the convexity of $ \tilde{U} $, the dominated convergence theorem and Lemma \ref{dual_BSDE_existence_lemma}, we conclude that $ h(\cdot) $ is continuously differentiable at $ \xi=1 $ and the derivative $ h'(1)=x_0\hat{y}+E\left[  Y^{(\hat{y},\hat{v})}(T)\tilde{U}'\left(Y^{(\hat{y},\hat{v})} (T)\right) \right] $ holds. Hence, we conclude that
\begin{equation}
\hat{p}_2(0)=-\frac{1}{\hat{y}}E\left[  Y^{(\hat{y},\hat{v})}(T)\tilde{U}'\left(Y^{(\hat{y},\hat{v})} (T)\right) \right]=x_0.
\end{equation}
Let $ (\hat{y},\tilde{v}) $ be an admissible control and $ \eta\triangleq \tilde{v}-\hat{v}$. Similar to the argument in \cite[page 781-782]{cvitanickaratzas:convexdual}, let the stopping time $ \tau_n\triangleq T \wedge \inf \{ t\in[0,T]; \int_0^t \| \delta_K(\eta(s)) \|^2+\| \theta^\intercal(s)\sigma^{-1}(s)\eta(s) \|^2+\| \hat{v}^\intercal(s)[\sigma^{-1}(s)]^\intercal\sigma^{-1}(s)\eta(s) \|^2 +\| \phi(s)\eta(s) \|^2 ds\geq n \textit{ or } \Big| \int_0^t \eta^\intercal(s)[\sigma^{-1}(s)]^\intercal dW(s)\Big|\geq n\} $. Let $ \eta_{n}(t)\triangleq \eta(t)1_{t\leq \tau_{n}} $. Define function $ \tilde{\phi}_n(\varepsilon)\triangleq \tilde{U}\left( Y^{(\hat{y},\hat{v}+\varepsilon\eta_n)}(T) \right) = \tilde{U}\left\{ \exp\left[\ln\left( Y^{(\hat{y},\hat{v}+\varepsilon\eta_n)}(T)\right)\right] \right\}$. According to Assumption \ref{primal_assump}, $ g(z)\triangleq\tilde{U}(e^z) $ is a convex function that is non-increasing. Moreover, since $ \delta_K $ is convex, $ f(\varepsilon)\triangleq\ln\left( Y^{(\hat{y},\hat{v}+\varepsilon\eta_n)}(T)\right) $ is a concave function of $ \varepsilon $. Hence $ \tilde{\phi}_n(\varepsilon) = g(f(\varepsilon)) $ is a convex function and  $ \tilde{\Phi}_n(\varepsilon)\triangleq\frac{\tilde{\phi}_n(\varepsilon)-\tilde{\phi}_n(0)}{\varepsilon} $ is an increasing function. Define $ \tilde{H}^{\eta_n}_\varepsilon(t) $ and $ \tilde{H}^{\eta_n}(t) $ as
\begin{align*}
\tilde{H}^{\eta_n}_\varepsilon({t}) \triangleq  &\int_0^{t}\delta_K(\hat{v}(s)+\varepsilon\eta_n(s))-\delta_K(\hat{v}(s))+\varepsilon\theta^\intercal(s)\sigma^{-1}(s)\eta_n(s)+\varepsilon\hat{v}^\intercal(s)[\sigma^{-1}(s)]^\intercal\sigma^{-1}(s)\eta_n(s)\\
&+\frac{1}{2}\varepsilon^2\eta_n^\intercal(s)[\sigma^{-1}(s)]^\intercal\sigma^{-1}(s)\eta_n(s)ds+\int_0^{t}\varepsilon\eta_n^\intercal(s)[\sigma^{-1}(s)]^\intercal dW(s),\\
\tilde{H}^{\eta_n}({t}) \triangleq & \int_0^{t}\delta_K(\eta_n(s))+\theta^\intercal(s)\sigma^{-1}(s)\eta_n(s)+\hat{v}^\intercal(s)[\sigma^{-1}(s)]^\intercal\sigma^{-1}(s)\eta_n(s)ds\\
&+\int_0^{t}\eta_n^\intercal(s)[\sigma^{-1}(s)]^\intercal dW(s).
\end{align*}

Let $ \varepsilon\in(0,1) $, we have
\begin{align*}
\tilde{\Phi}_n(\varepsilon)=&\frac{\tilde{U}\left( Y^{(\hat{y},\hat{v}+\varepsilon\eta_n)}({T}) \right)-\tilde{U}\left( Y^{(\hat{y},\hat{v})}({T}) \right) }{\varepsilon}\\
=&\frac{\tilde{U}\left( Y^{(\hat{y},\hat{v}+\varepsilon\eta_n)}({T}) \right)-\tilde{U}\left( Y^{(\hat{y},\hat{v})}({T}) \right) }{Y^{(\hat{y},\hat{v}+\varepsilon\eta_n)}({T})-Y^{(\hat{y},\hat{v})}({T})}\dfrac{Y^{(\hat{y},\hat{v})}({T})}{\varepsilon}\left[ \dfrac{Y^{(\hat{y},\hat{v}+\varepsilon\eta_n)}({T})}{Y^{(\hat{y},\hat{v})}({T})} -1 \right]\\
=&\frac{\tilde{U}\left( Y^{(\hat{y},\hat{v}+\varepsilon\eta_n)}({T}) \right)-\tilde{U}\left( Y^{(\hat{y},\hat{v})}({T}) \right) }{Y^{(\hat{y},\hat{v}+\varepsilon\eta_n)}({T})-Y^{(\hat{y},\hat{v})}({T})}\dfrac{Y^{(\hat{y},\hat{v})}({T})}{\varepsilon}\left[ \exp\left( -\tilde{H}^{\eta_n}_\varepsilon({T})  \right) -1 \right]\\
\leq &\frac{\tilde{U}\left( Y^{(\hat{y},\hat{v}+\varepsilon\eta_n)}({T}) \right)-\tilde{U}\left( Y^{(\hat{y},\hat{v})}({T}) \right) }{Y^{(\hat{y},\hat{v}+\varepsilon\eta_n)}({T})-Y^{(\hat{y},\hat{v})}({T})}\dfrac{Y^{(\hat{y},\hat{v})}({T})}{\varepsilon}\\
&\left\{ -1+\exp\left[ -\varepsilon \int_0^{T}\bigg(\delta_K(\eta_n(t))+\theta^\intercal(t)\sigma^{-1}(t)\eta_n(t)+\hat{v}^\intercal(t)[\sigma^{-1}(t)]^\intercal\sigma^{-1}(t)\eta_n(t) \right.\right.\\
& \left.\left.+\frac{1}{2}\varepsilon\eta_n^\intercal(t)[\sigma^{-1}(t)]^\intercal\sigma^{-1}(t)\eta_n(t)\bigg)dt -\varepsilon \int_0^{T}\eta_n^\intercal(t)[\sigma^{-1}(t)]^\intercal dW(t) \right] \right\}.
\end{align*}
Hence, taking $ \limsup $ on both sides, we have
\begin{equation*}
\limsup_{\varepsilon\rightarrow 0}\tilde{\Phi}_n(\varepsilon)\leq -\tilde{U}'\left(Y^{(\hat{y},\hat{v})}({T})\right)Y^{(\hat{y},\hat{v})}({T})\tilde{H}^{\eta_n}({T})
\end{equation*}
with
\begin{equation*}
E\left[ \bigg|\tilde{U}'\left(Y^{(\hat{y},\hat{v})}({T})\right)Y^{(\hat{y},\hat{v})}({T})\tilde{H}^{\eta_n}({T}) \bigg|\right]\leq E\left[\left(\tilde{U}'\left(Y^{(\hat{y},\hat{v})}({T})\right)Y^{(\hat{y},\hat{v})}({T})\right)^2\right]^{\frac{1}{2}}E\left[\tilde{H}^{\eta_n}({T})^2\right]^{\frac{1}{2}}< \infty.
\end{equation*}
Moreover, notice that as $ \varepsilon\in(0,1) $ approaches zero, the sequence $ \left(\frac{\tilde{U}\left( Y^{(\hat{y},\hat{v}+\varepsilon\eta_n)}({T}) \right)-\tilde{U}\left( Y^{(\hat{y},\hat{v})}({T}) \right) }{\varepsilon}\right)_{\varepsilon\in (0,1)} $ is bounded from above by $| \tilde{\Phi}_n(1) |$ and $E\left[| \tilde{\Phi}_n(1) |\right]<\infty$.  By the reverse Fatou lemma, we have
\begin{equation*}
0\leq \limsup_{\varepsilon\rightarrow 0}E\left[\tilde{\Phi}_n(\varepsilon)\right] \leq E\left[\limsup_{\varepsilon\rightarrow 0}\tilde{\Phi}_n(\varepsilon) \right] \leq E\left[ -\tilde{U}'\left(Y^{(\hat{y},\hat{v})}({T})\right)Y^{(\hat{y},\hat{v})}({T})\tilde{H}^{\eta_n}({T}) \right].
\end{equation*}


Let $ (\phi,\varphi) $ be defined as in Lemma \ref{dual_BSDE_existence_lemma} and $ (\hat{p}_2,\hat{q}_2) $ be the adjoint process corresponding to $ (\hat{y},\hat{v}) $. Apply Ito's lemma to $ \phi(t)\tilde{H}_n^\eta(t) $, we obtain
\begin{align*}
&d\phi(t)\tilde{H}^{\eta_n}(t)\\
=&-\varphi^\intercal(t)\tilde{H}^{\eta_n}(t)dW(t)+\phi(t)\left( \delta_K(\eta_n(t))+\theta^\intercal(t)\sigma^{-1}(t)\eta_n(t)+\hat{v}(t)[\sigma^{-1}(t)]^\intercal\sigma^{-1}(t)\eta_n(t) \right)dt\\
&+\phi(t)\eta_n^\intercal(t)[\sigma^{-1}(t)]^\intercal dW(t)+\eta_n^\intercal(t)[\sigma^{-1}(t)]^\intercal\varphi(t)dt\\
=&Y^{\hat{y},\hat{v}}(t)\left[ \delta_K(\eta_n(t))\hat{p}_2(t)+\hat{q}_2(t)\sigma^{-1}(t)\eta_n(t)) \right]dt+\left[ \phi(t)\eta_n^\intercal(t)[\sigma^{-1}(t)]^\intercal-\varphi^\intercal(t)\tilde{H}^{\eta_n}(t) \right]dW(t)
\end{align*}
Following similar approach as in the proof of necessary condition for the primal problem, it can be shown that $ \int_0^{t} \left[ \phi(s)\eta_n^\intercal(s)[\sigma^{-1}(s)]^\intercal-\varphi^\intercal(s)\tilde{H}^{\eta_n}(s) \right]dW(s) $ is a true martingale. Taking expectation of the above equation, we obtain
\begin{equation}\label{dual_proof_contra}
E\left[\int_0^{\tau_n} Y^{(\hat{y},\hat{v})}(t)\left[ \delta_K(\eta(t))\hat{p}_2(t)+\hat{q}_2(t)\sigma^{-1}(t)\eta(t)) \right]dt\right]\geq 0.
\end{equation}
Note that $ \hat{p}_2(t)=\frac{\phi(t)}{Y^{(\hat{y},\hat{v})}(t)}>0 $, define the event $ B\triangleq\left\{ (\omega,t): \hat{p}_2(t)^{-1}\sigma(t)^{-1}\hat{q}_2(t)\not\in K \right\} $. According to \cite[Lemma 5.4.2 on page 207]{karatzasshreve:mathfinance}, there exists some $ \mathbb{R}^N $ valued progressively measurable process $ \eta $ such that $ \|\eta(t)\|\leq 1 $ and $\| \delta_{K}(\eta(t)) \|\leq 1$ a.e. and 
\begin{align*}
\delta_K(\eta(t))+\hat{p}_2(t)^{-1}\hat{q}_2(t)'\sigma(t)^{-1}<0 \textit{ a.e. on } B,\\
\delta_K(\eta(t))+\hat{p}_2(t)^{-1}\hat{q}_2(t)'\sigma(t)^{-1}=0 \textit{ a.e. on } B^c.
\end{align*}
Let $ \tilde{v}\triangleq \hat{v}+\eta $. We can easily verify that $ \tilde{v} $ is progressively measurable and square integrable. Hence, we obtain that
\begin{equation*}
E\left\{ \int_0^{\tau_n} Y^{(\hat{y},\hat{v})}(t)\left[ \hat{p}_2(t)\left( \delta_K(\eta(t))\right)+\hat{q}_2(t)'\sigma(t)^{-1}\eta(t) \right]dt \right\} < 0,
\end{equation*}
contradicting with \eqref{dual_proof_contra}.
Hence, by the $ \mathbb{P} $ strict positivity of $ Y^{(\tilde{y},\tilde{v})}(t)\hat{p}_2(t) $, we conclude that $ \hat{p}_2(t)^{-1}\sigma(t)^{-1}\hat{q}_2(t)\in K $ a.e. (this argument is essentially identical to the analysis in the proof of Proposition 4.17 in \cite{heunislabbe:constrainedUtility}). Take $ \tilde{v}=2\hat{v} $, and we have
\begin{equation}
E\left\{ \int_0^{\tau_n} Y^{(\tilde{y},\tilde{v})}(t)\left[ \hat{p}_2(t)\left( \delta_K(\hat{v}(t))\right)+\hat{q}_2(t)'\sigma(t)^{-1}\hat{v}(t) \right]dt \right\} \geq 0.
\end{equation}
Lastly, to prove the third condition, simply take $ \tilde{v}=0 $ and by the same analysis, we obtain
\begin{equation*}
E\left\{ \int_0^{\tau_n} Y^{(\tilde{y},\tilde{v})}(t)\left[ \hat{p}_2(t)\left( \delta_K(\hat{v}(t))\right)+\hat{q}_2(t)'\sigma(t)^{-1}\hat{v}(t) \right]dt \right\} \leq 0.
\end{equation*}
On the other hand, by the definition of $ \delta_K $, we have $ \delta_K(\hat{v}(t))+\hat{p}_2(t)^{-1}\hat{q}_2^\intercal(t)\sigma^{-1}(t)\hat{v}(t)\geq 0 $ a.e. Combining with the $ \mathbb{P} $ strict positivity of $ Y^{(\tilde{y},\tilde{v})}(t)\hat{p}_2(t) $ gives the last condition.
We have proved the necessary condition. 

Now we prove the sufficient condition. 
Let $ \left( Y^{(\hat{y},\hat{v})},\hat{p}_2,\hat{q}_2 \right) $ be a solution to the FBSDE \eqref{dual_FBSDE} and satisfy conditions \eqref{dual_conditions}. Let the pair $ (\tilde{y},\tilde{v})\in(0,\infty)\times\mathcal{D} $ be a given admissible control such that $ Y^{(\hat{y},\hat{v})} $ solves the SDE \eqref{dual_state_process} and $ E\left[ \tilde{U}(Y^{(\tilde{y},\tilde{v})}(T))^2 \right]<\infty $. By Lemma \ref{dual_BSDE_existence_lemma}, we claim that there exists adjoint process $ (\tilde{p}_2,\tilde{q}_2) $ that solves the BSDE with control $ (\tilde{y},\tilde{v}) $.
Applying Ito's formula, we have
\begin{align*}
&\left( Y^{(\hat{y},\hat{v})}(t)-Y^{(\tilde{y},\tilde{v})}(t)\right)\hat{p}_2(t)\\
=&\hat{p}_2(0)y+\int_0^t\left\lbrace Y^{(\tilde{y},\tilde{v})}(s)\left[r(s)+\delta_K(\tilde{v}(s))\right]^\intercal-Y^{(\hat{y},\hat{v})}(s)\left[r(s)+\delta_K(\hat{v}(s))\right]^\intercal\right\rbrace\hat{p}_2(s)ds\\
&+\int_0^t\left\lbrace Y^{(\tilde{y},\tilde{v})}(s)[\theta(s)+\sigma^{-1}(s)\tilde{v}(s)]^\intercal-Y^{(\hat{y},\hat{v})}(s)[\theta(s)+\sigma^{-1}(s)\hat{v}(s)]^\intercal\right\rbrace\hat{p}_2(s)dW(s)\\
&+\int_0^t\left( Y^{(\hat{y},\hat{v})}(s)-Y^{(\tilde{y},\tilde{v})}(s) \right)\left\lbrace \left[r(s)+\delta_K(\tilde{v}(s))\right]^\intercal\hat{p}_2(s)+\hat{q}^\intercal_2(s)\left[ \theta(s)+\sigma^{-1}(s)\hat{v}(s) \right]\right\rbrace ds\\
&+\int_0^t \left( Y^{(\hat{y},\hat{v})}(s)-Y^{(\tilde{y},\tilde{v})}(s) \right)\hat{q}^\intercal_2(s) dW(s)\\
&+\int_0^t \left\lbrace Y^{(\tilde{y},\tilde{v})}(s)[\theta(s)+\sigma^{-1}(s)\tilde{v}(s)]^\intercal-Y^{(\hat{y},\hat{v})}(s)[\theta(s)+\sigma^{-1}(s)\hat{v}(s)]^\intercal\right\rbrace\hat{q}_2(s)ds\\
=&\hat{p}_2(0)y+\int_0^t Y^{(\tilde{y},\tilde{v})}(s)\hat{p}_2(s)\left[ \delta_K(\tilde{v}(s))-\delta_K(\hat{v}(s))+\hat{q}^\intercal_2(s)\sigma^{-1}(s)\left( \tilde{v}(s)-\hat{v}(s) \right) \right]ds\\
&+\int_0^t \left\lbrace Y^{(\tilde{y},\tilde{v})}(s)[\theta(s)+\sigma^{-1}(s)\tilde{v}(s)]^\intercal-Y^{(\hat{y},\hat{v})}(s)[\theta(s)+\sigma^{-1}(s)\hat{v}(s)]^\intercal \right\rbrace \hat{p}_2(s)dW(s)\\
&+\int_0^t \left( Y^{(\hat{y},\hat{v})}(s)-Y^{(\tilde{y},\tilde{v})}(s) \right)\hat{q}^\intercal_2(s) dW(s) .
\end{align*}
By \eqref{dual_conditions} and taking expectation, we have
\begin{align*}
E\left[\left(Y^{(\hat{y},\hat{v})}(T)-Y^{(\tilde{y},\tilde{v})}(T)\right)\hat{p}_2(T)\right]\geq & y\hat{p}_2(0).
\end{align*}
By convexity of $ \tilde{U} $ we obtain
\begin{equation*}
x_0\tilde{y}+E\left[ \tilde{U}(Y^{(\tilde{y},\tilde{v})}(T)) \right]-x_0\hat{y}-E\left[\tilde{U}(Y^{(\hat{y},\hat{v})}(T))\right]\geq y(x_0-\hat{p}_2(0)) = 0.
\end{equation*}
Hence, we conclude that $ (\hat{y},\hat{v}) $ is indeed an optimal control of the dual problem.
\end{proof}
We can now state the dynamic relations of the primal portfolio and wealth processes of the primal problem and the adjoint processes of the dual problem and vice versa.
\begin{theorem}\label{thm_dual_primal}(From dual problem to primal problem)
Suppose that $ (\hat{y},\hat{v}) $ is optimal for the dual problem. Let $ \left( Y^{(\hat{y},\hat{v})},\hat{p}_2,\hat{q}_2 \right) $ be the associated process that solves the FBSDE \eqref{dual_FBSDE} and satisfies condition \eqref{dual_conditions}. Define
\begin{equation}\label{primal_control_by_dual}
\hat{\pi}(t)\triangleq \dfrac{\left[\sigma^\intercal(t)\right]^{-1}\hat{q}_2(t)}{\hat{p}_2(t)},\ t\in[0,T].
\end{equation}
Then $ \hat{\pi} $ is the optimal control for the primal problem with initial wealth $ x_0 $. The optimal wealth process and associated adjoint process are given by
\begin{equation}\label{primal_proess_by_dual}
\left\{
\begin{array}{l}
X^{\hat{\pi}}(t)=\hat{p}_2(t),\vspace{1mm}\\
\hat{p}_1(t)=-Y^{(\hat{y},\hat{v})}(t),\vspace{1mm}\\
\hat{q}_1(t)=Y^{(\hat{y},\hat{v})}(t)[\sigma^{-1}(t)\hat{v}(t)+\theta(t)].
\end{array}
\right.
\end{equation}
\end{theorem}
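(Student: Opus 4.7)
The strategy is to apply the sufficient direction of Theorem \ref{primal_thm}: it suffices to show that the candidate triple $(X^{\hat\pi},\hat p_1,\hat q_1)$ defined through \eqref{primal_control_by_dual}--\eqref{primal_proess_by_dual} lies in the admissible class, satisfies the primal FBSDE \eqref{primal_fbsde}, and obeys the normal-cone condition \eqref{primal_cond}. Then optimality of $\hat\pi$ follows automatically, and the identifications claimed in \eqref{primal_proess_by_dual} will have been established along the way.

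First I would check admissibility of $\hat\pi$. The inclusion $\hat\pi(t)\in K$ is exactly the second line of \eqref{dual_conditions}, and $\hat\pi\in\mathcal H^2(0,T;\mathbb R^N)$ follows from $\hat p_2, \hat q_2\in \mathcal H^2$ together with the strict positivity of $\hat p_2$ and uniform invertibility of $\sigma$. Next, I would identify $X^{\hat\pi}$ with $\hat p_2$ by substituting $\hat q_2=\hat p_2\sigma^\intercal\hat\pi$ into the adjoint SDE for $\hat p_2$ and invoking the third condition in \eqref{dual_conditions}, namely $\hat p_2\delta_K(\hat v)+\hat q_2^\intercal\sigma^{-1}\hat v=0$. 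This collapses the drift of $\hat p_2$ into $\hat p_2[r+\hat\pi^\intercal\sigma\theta]$ and the diffusion into $\hat p_2\hat\pi^\intercal\sigma\,dW$, matching \eqref{primal_state_process}; since $\hat p_2(0)=x_0$ by the first line of \eqref{dual_conditions}, uniqueness of the SDE gives $X^{\hat\pi}\equiv \hat p_2$.

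Second, I would verify that $\hat p_1=-Y^{(\hat y,\hat v)}$ with $\hat q_1=Y^{(\hat y,\hat v)}[\sigma^{-1}\hat v+\theta]$ solves the primal BSDE \eqref{primal_adjoint_BSDE} with control $\hat\pi$. Applying It\^o to $-Y^{(\hat y,\hat v)}$ using \eqref{dual_state_process}, the diffusion coefficient is visibly $\hat q_1^\intercal$. For the drift, the key algebraic identity to check is
\[
\hat p_1\,\delta_K(\hat v) \;=\; -\hat p_1\,\hat v^\intercal\hat\pi,
\]
which after substituting $\hat\pi=[\sigma^\intercal]^{-1}\hat q_2/\hat p_2$ and using $X^{\hat\pi}=\hat p_2$ reduces precisely to the third equation in \eqref{dual_conditions}. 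The terminal condition $\hat p_1(T)=-U'(X^{\hat\pi}(T))$ then becomes $Y^{(\hat y,\hat v)}(T)=U'(-\tilde U'(Y^{(\hat y,\hat v)}(T)))$, which is the standard Fenchel--Young identity $U'\circ I = \mathrm{id}$ with $I=-\tilde U'$.

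Finally, I would verify the normal-cone condition \eqref{primal_cond}. A short computation yields
\[
-X^{\hat\pi}(t)\sigma(t)\bigl[\hat p_1(t)\theta(t)+\hat q_1(t)\bigr] \;=\; -\hat p_2(t)\,Y^{(\hat y,\hat v)}(t)\,\hat v(t),
\]
and for any $\pi'\in K$, using $\hat v^\intercal\hat\pi=-\delta_K(\hat v)$ (just established) together with the defining inequality $\delta_K(\hat v)\ge -\pi'^\intercal\hat v$ of the support function, one obtains $[-\hat p_2 Y\hat v]^\intercal(\pi'-\hat\pi)\le 0$, which is exactly membership in $N_K(\hat\pi(t))$. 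Invoking the sufficient part of Theorem \ref{primal_thm} then closes the argument. The main technical point is the clean algebraic dovetailing of the three items in \eqref{dual_conditions} with the structure of the primal FBSDE and the support-function inequality; once these identifications are written down carefully, the rest is mechanical.
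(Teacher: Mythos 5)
Your proposal is correct and follows essentially the same route as the paper: construct the candidate triple, verify it solves the primal FBSDE \eqref{primal_fbsde} (the drift matching reducing to the identity $\delta_K(\hat v)+\hat v^\intercal\hat\pi=0$, which is the third line of \eqref{dual_conditions}, and the terminal condition to $U'\circ(-\tilde U')=\mathrm{id}$), check admissibility and the normal-cone condition \eqref{primal_cond} via the computation $-X^{\hat\pi}\sigma[\hat p_1\theta+\hat q_1]=-Y^{(\hat y,\hat v)}\hat p_2\hat v$ together with positivity of $Y^{(\hat y,\hat v)}\hat p_2$, and invoke the sufficiency half of Theorem \ref{primal_thm}. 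Your write-up is in fact more explicit than the paper's, which compresses most of these verifications into ``substituting back \ldots\ we conclude'' and ``it can be easily shown.''
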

\begin{proof}
Suppose that $ (\hat{y},\hat{v})\in(0.\infty)\times\mathcal{D} $ is optimal for the dual problem. By Theorem \ref{dual_thm}, the process $ \left(Y^{(\hat{y},\hat{v})},\hat{p}_2,\hat{q}_2\right) $ solves the dual FBSDE \eqref{dual_FBSDE} and satisfies condition \eqref{dual_conditions}. Construct $ \hat{\pi} $ and $ (X^{\hat{\pi}},\hat{p}_1,\hat{q}_1) $ as in \eqref{primal_control_by_dual} and \eqref{primal_proess_by_dual}, respectively. Substituting back into the \eqref{primal_fbsde}, we conclude that $ (X^{\hat{\pi}},\hat{p}_1,\hat{q}_1) $ solves the FBSDE for the primal problem. Moreover, by \eqref{dual_conditions} it can be easily shown that $ \hat{\pi}\in\mathcal{A} $ and \eqref{primal_cond} holds. By condition \eqref{dual_conditions}, it can be easily shown that $ \pi\in\mathcal{A} $. Moreover, we have
\begin{align*}
&X^{\hat{\pi}}(t)\sigma(t)\left[ \hat{p}_1(t)\theta(t)+\hat{q}_1(t) \right]\\
&=\hat{p}_2(t)\sigma(t)\left\{ -Y^{(\hat{y},\hat{v})}(t)\theta(t)+Y^{(\hat{y},\hat{v})}(t)\left[ \sigma^{-1}(t)\hat{v}(t)+\theta(t) \right] \right\}\\
&=Y^{(\hat{y},\hat{v})}(t)\hat{p}_2(t)\hat{v}(t).
\end{align*}
Combing with the third statement of \eqref{dual_conditions} and the almost surely positivity of $ Y^{(\hat{y},\hat{v})}\hat{p}_2 $, we claim that condition \eqref{primal_cond} holds.  By Theorem \ref{dual_thm} we conclude that $ \hat{\pi} $ is indeed an optimal control to the primal problem.
\end{proof}
\begin{theorem}(From primal problem to dual problem)
Suppose that $ \hat{\pi}\in\mathcal{A} $ is optimal for the primal problem with initial wealth $ x_0 $. Let $ (X^{\hat{\pi}},\hat{p}_1,\hat{q}_1) $ be the associated process that satisfies the FBSDE \eqref{primal_fbsde} and conditions \eqref{primal_cond}. Define
\begin{equation}\label{dual_control_by_primal}
\left\{
\begin{array}{l}
\hat{y}\triangleq-\hat{p}_1(0),\\
\hat{v}(t)\triangleq-\sigma(t)\left[ \dfrac{\hat{q}_1(t)}{\hat{p}_1(t)}+\theta(t) \right], \textit{ for }\forall t\in[0,T].
\end{array}
\right.
\end{equation}
Then $ (\hat{y},\hat{v}) $ is an optimal control for the dual problem. The optimal dual state process and associated adjoint process are given by
\begin{equation}\label{dual_process_by_primal}
\left\{
\begin{array}{l}
Y^{(\hat{y},\hat{v})}(t)=-\hat{p}_1(t),\vspace{1mm}\\
\hat{p}_2(t)=X^{\hat{\pi}}(t),\vspace{1mm}\\
\hat{q}_2(t)=\sigma^\intercal(t)\hat{\pi}(t)X^{\hat{\pi}}(t).
\end{array}
\right.
\end{equation}
\end{theorem}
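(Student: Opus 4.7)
The plan is to mirror the proof of Theorem \ref{thm_dual_primal}: verify directly that the processes defined by \eqref{dual_control_by_primal}--\eqref{dual_process_by_primal} satisfy the dual FBSDE \eqref{dual_FBSDE} together with the three conditions \eqref{dual_conditions}, and then invoke the sufficient direction of Theorem \ref{dual_thm} to conclude optimality. No new variational argument is needed.

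First I would check admissibility of $(\hat{y},\hat{v})$. Positivity of $\hat{y}$ follows from Lemma \ref{primal_bsde_lemma}: $\hat{p}_1(t)=\alpha(t)/X^{\hat{\pi}}(t)<0$ since $\alpha(t)=E[-X^{\hat{\pi}}(T)U'(X^{\hat{\pi}}(T))\mid\mathcal{F}_t]<0$, so $\hat{y}=-\hat{p}_1(0)>0$. Square-integrability of $\hat{v}$ follows from boundedness of $\sigma,\theta$ and $\hat{q}_1/\hat{p}_1\in\mathcal{H}^2$. The integrability of $\delta_K(\hat{v}(t))$ will come out of the key identity in the next step.

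The central computation is the following normal-cone identity. A direct rearrangement gives
\[
-X^{\hat{\pi}}(t)\sigma(t)[\hat{p}_1(t)\theta(t)+\hat{q}_1(t)]=X^{\hat{\pi}}(t)\hat{p}_1(t)\hat{v}(t),
\]
so \eqref{primal_cond} becomes $X^{\hat{\pi}}(t)\hat{p}_1(t)\hat{v}(t)\in N_K(\hat{\pi}(t))$. Since $X^{\hat{\pi}}\hat{p}_1<0$, the definition of the normal cone then yields $(\pi-\hat{\pi}(t))^\intercal\hat{v}(t)\geq 0$ for every $\pi\in K$, which (taking $\pi=0$ and using the supremum in \eqref{support_function}) gives the key identity $\delta_K(\hat{v}(t))=-\hat{\pi}(t)^\intercal\hat{v}(t)$. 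This immediately delivers the third condition in \eqref{dual_conditions}, because with $\hat{p}_2=X^{\hat{\pi}}$ and $\hat{q}_2=\sigma^\intercal\hat{\pi}X^{\hat{\pi}}$ one has $\hat{p}_2\delta_K(\hat{v})+\hat{q}_2^\intercal\sigma^{-1}\hat{v}=X^{\hat{\pi}}[\delta_K(\hat{v})+\hat{\pi}^\intercal\hat{v}]=0$. It also gives $\delta_K(\hat{v})\in\mathcal{H}^1$, completing the proof that $\hat{v}\in\mathcal{D}$.

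The remaining verifications are then direct. For $Y(t):=-\hat{p}_1(t)$: substituting $\sigma^{-1}\hat{v}+\theta=\hat{q}_1/Y$ (by definition of $\hat{v}$) into the primal adjoint BSDE \eqref{primal_adjoint_BSDE} and using the same identity $\delta_K(\hat{v})=-\hat{\pi}^\intercal\hat{v}$ shows that the drift and diffusion of $-d\hat{p}_1$ match those of the dual state SDE \eqref{dual_state_process}, with $Y(0)=\hat{y}$. For the dual adjoint: the dynamics of $X^{\hat{\pi}}$ from \eqref{primal_state_process} match those prescribed for $\hat{p}_2$ in \eqref{dual_adjoint_BSDE} once one notices $\hat{q}_2^\intercal[\theta+\sigma^{-1}\hat{v}]=X^{\hat{\pi}}\hat{\pi}^\intercal\sigma\theta+X^{\hat{\pi}}\hat{\pi}^\intercal\hat{v}$ and applies the identity again. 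The terminal condition is $\hat{p}_2(T)=X^{\hat{\pi}}(T)=-\tilde{U}'(U'(X^{\hat{\pi}}(T)))=-\tilde{U}'(Y(T))$ by the standard conjugacy $\tilde{U}'=-(U')^{-1}$. Finally, $\hat{p}_2(0)=X^{\hat{\pi}}(0)=x_0$ and $\hat{p}_2^{-1}[\sigma^\intercal]^{-1}\hat{q}_2=\hat{\pi}\in K$ by construction, giving the remaining two conditions of \eqref{dual_conditions}. Applying the sufficient direction of Theorem \ref{dual_thm} concludes.

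I expect the only non-mechanical step to be the normal-cone identity $\delta_K(\hat{v}(t))=-\hat{\pi}(t)^\intercal\hat{v}(t)$; every other piece is an algebraic matching of drifts and diffusions. Care is also needed to confirm that $\hat{p}_1$ does not vanish (so that $\hat{v}$ is well defined), which follows from the strict negativity of $\alpha(t)$ in Lemma \ref{primal_bsde_lemma} together with positivity of $X^{\hat{\pi}}$.
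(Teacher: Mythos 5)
Your proposal is correct and follows essentially the same route as the paper: construct the dual processes, verify by substitution that they solve the dual FBSDE \eqref{dual_FBSDE} and the three conditions \eqref{dual_conditions}, and invoke the sufficiency direction of Theorem \ref{dual_thm}. In fact you supply the key details the paper leaves implicit, notably the identity $\delta_K(\hat{v}(t))=-\hat{\pi}(t)^\intercal\hat{v}(t)$ derived from the normal-cone condition \eqref{primal_cond} and the sign of $X^{\hat{\pi}}\hat{p}_1$.
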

\begin{proof}
Suppose that $ \hat{\pi}\in\mathcal{A} $ is an optimal control for the primal problem. By Theorem \ref{primal_thm}, the process $ (X^{\hat{\pi}},\hat{p}_1,\hat{q}_1) $ solves that FBSDE \eqref{primal_fbsde} and satisfies conditions \eqref{primal_cond}. Define $ (\hat{y},\hat{v}) $ and $ (Y^{(\hat{y},\hat{v})},\hat{p}_2,\hat{q}_2) $ as in \eqref{dual_control_by_primal} and \eqref{dual_process_by_primal}, respectively. Substituting them back into \eqref{dual_FBSDE}, we obtain that $ (Y^{(\hat{y},\hat{v})},\hat{p}_2,\hat{q}_2) $ solves the FBSDE for the dual problem. Moreover, by the construction in \eqref{dual_control_by_primal} and \eqref{dual_process_by_primal}, we have $ \hat{p}_2(0)=x_0 $ and $ [\sigma^\intercal(t)]^{-1}\hat{q}_2(t)=\hat{\pi}(t)X^{\hat{\pi}}(t)^{-1}\in K $. Substituting $\hat{v}$ into \eqref{dual_conditions}, we can easily show that the third statement in \eqref{dual_conditions} holds. Hence, by Theorem \ref{dual_thm}, we conclude that $ (\hat{y},\hat{v}) $ is indeed an optimal control to the dual problem.
\end{proof}
\section{Examples}
In this section, we shall use the results introduced in previous sections to address several classical constrained utility maximization problems.
\subsection{Constrained Power Utility Maximization}
In this subsection, we assume $ U $ is a power utility function defined by $ U(x)\triangleq\frac{1}{\beta}x^{\beta}, \ x\in(0,\infty) $, where $ \beta \in(0,1) $ is a constant. In addition, we assume that $ K\subseteq \mathbb{R}^N $ is a closed convex cone. In this case, the dual problem can be written as
\begin{equation*}
\textit{Minimize }x_0y+E\left[ \tilde{U}\left( Y^{(y,v)}(T) \right) \right]
\end{equation*}
over  $ (y,v)\in(0,\infty)\times\mathcal{D} $, where $ \tilde{U}(y)=\frac{1-\beta}{\beta}y^{\frac{\beta}{\beta-1}},\ y\in(0,\infty) $. We solve the above problem in two steps: first fix $ y $ and find the optimal control $ \hat{v}(y) $; second find the optimal $ \hat{y} $. We can then construct the optimal solution explicitly.
\begin{description}
\item [Step 1:] Consider the associated HJB equation:
\begin{equation}\label{power_utility_HJB}
\left\{
\begin{array}{l}
v_t(s,y)-r(s)yv_y(s,y)+\frac{1}{2}\inf_{v\in \tilde{K}}\| \sigma^{-1}(s)v+\theta(s) \|^2y^2v_{yy}(s,y)=0 \vspace{1mm}\\
v(T,y)=\frac{1-\beta}{\beta}y^{\frac{\beta}{\beta-1}},
\end{array}
\right.
\end{equation}
for each $ (s,y)\in[t,T]\times\mathbb{R} $. The infimum term in \eqref{power_utility_HJB} can be written explicitly as $ \hat{v}(s)=\sigma(s)\textit{proj}[-\theta(s)|\sigma^{-1}(s)\tilde{K}] $.
Then the HJB equation \eqref{power_utility_HJB} becomes
\begin{equation*}
\left\{
\begin{array}{l}
v_t(s,y)-r(s)yv_y(s,y)+\frac{1}{2}y^2\theta_{v}^2(s)v_{yy}(s,y)=0 \vspace{1mm}\\
v(T,y)=\frac{1-\beta}{\beta}y^{\frac{\beta}{\beta-1}},
\end{array}
\right.
\end{equation*}
where $ \theta_{\hat{v}}(s)=\theta(s)+\sigma^{-1}(s)\hat{v}(s) $.\\
According to the Feynman-Kac formula, we have
\begin{align*}
v(t,y)=E\left[ \frac{1-\beta}{\beta}Y^{\frac{\beta}{\beta-1}}(T) \right]=\frac{1-\beta}{\beta}y^{\frac{\beta}{\beta-1}}\exp{\left\{\int_t^T\left[\frac{1}{2}\frac{\beta}{(\beta-1)^2}\theta_v^2(s)-\frac{\beta}{\beta-1}r(s)\right]ds\right\}},
\end{align*}
where the stochastic process $ Y $ follows the geometric Brownian motion
\begin{equation*}
dY(t)=-Y(t)[r(t)dt+\theta_{v}(t)dW(t)],\ Y(0)=y.
\end{equation*}
In particular, we have $ v(0,y)=y^{\frac{\beta}{\beta-1}}\exp{\left\{\int_0^T\left[\frac{1}{2}\frac{\beta}{(\beta-1)^2}\theta_{\hat{v}}^2(s)-\frac{\beta}{\beta-1}r(s)\right]ds\right\}} $

\item [Step 2:] Solving the following static optimization problem 
\begin{equation*}
\inf_{y\in\mathbb{R}} x_0y+y^{\frac{\beta}{\beta-1}}\exp{\left\{\int_0^T\left[\frac{1}{2}\frac{\beta}{(\beta-1)^2}\theta_{\hat{v}}^2(s)-\frac{\beta}{\beta-1}r(s)\right]ds\right\}},
\end{equation*}
we obtain
\begin{equation}
\hat{y}=x_0^{\beta-1}\exp{\left\{(1-\beta)\int_0^T\left[ \frac{\beta}{2(\beta-1)^2}\theta_{v}^2(s)-\frac{\beta}{\beta-1}r(s) \right]ds\right\}}.
\end{equation}
\end{description}
Solving the adjoint BSDE, we have
\begin{align}\label{power_utility_p1}
\hat{p}_2(t)&=x_0\exp{\int_0^t\left[r(s)+\frac{(1-2\beta)}{2(1-\beta)^2}\theta_{\hat{v}}(s)^2\right]ds+\frac{1}{1-\beta}\int_0^t\theta_{\hat{v}}(s)dW(s)},\\\label{power_utility_q1}
\hat{q}_2(t)&=\dfrac{\theta_{\hat{v}}(t)}{1-\beta}\hat{p}_2(t).
\end{align}
Applying Theorem \ref{thm_dual_primal}, we can construct the optimal solution to the primal problem using the optimal solutions of the dual problem and hence arrive at the following closed form solutions:
\begin{equation*}
\left\{
\begin{array}{l}
\hat{\pi}(t)=[\sigma(t)^\intercal]^{-1}\dfrac{\theta_{\hat{v}}(t)}{1-\beta},\\
X^{\hat{\pi}}(t)=x_0\exp{\left\{\displaystyle\int_0^t\left[r(s)+\frac{(1-2\beta)}{2(1-\beta)^2}\theta_{\hat{v}}(s)^2\right]ds+\frac{1}{1-\beta}\int_0^t\theta_{\hat{v}}(s)dW(s)\right\}}.
\end{array}
\right.
\end{equation*}
\subsection{Constrained Log Utility Maximization with Random Coefficients}
In this section, we assume that $ U $ is a log utility function defined by $ U(x)=\log x$ for $ x>0 $. The dual function of $ U $ is defined as $ \tilde{U}(y)\triangleq -(1+\log y), \ y\geq 0 $. Assume that $ K\subseteq \mathbb{R}^N $ is a closed convex set and $ r,b,\sigma $ are uniformly bounded $ \{\mathcal{F}_t\} $ progressively measurable processes on $ \Omega\times [0,T] $.
\begin{description}
\item[Step 1:] We fix $ y $ and attempt to solve for the optimal control $ \hat{v}(y) $. Note that the dynamic programming technique is not appropriate in this case due to the non-Markov nature of the problem. However, following the approach in \cite[Section 11, p.790]{cvitanickaratzas:convexdual} the problem can be solved explicitly due to the special property of the logarithmic function.

Let $ v\in\mathcal{D} $ be any given admissible control and the objective function becomes
\begin{equation*}
x_0y + E\left[\tilde{U}\left( Y^{(y,v)}(T) \right)\right]=x_0y-1-\log y - E\left[ \int_0^T r(t)+\delta_{K}(v(t))+\frac{1}{2}\| \theta(t)+\sigma(t)v(t) \|^2 dt \right].
\end{equation*}
The dual optimization boils down to the following problem of pointwise minimization of a convex function $\delta_{K}(v)+\frac{1}{2}\| \theta(t)+\sigma(t)v \|^2$ over $ v\in\tilde{K} $ for $ \forall t\in[0,T] $. Applying classical measurable selection theorem (see \cite{schal:aselectionthem} and \cite{schal:condoptm}), we conclude that the process defined by
\begin{equation}\label{egpointwisemin_1}
\hat{v}(t)\triangleq \argmin_{v\in\tilde{K}}\left[ \delta_{K}(v)+\frac{1}{2}\| \theta(t)+\sigma(t)^{-1}v \|^2 \right]
\end{equation}
is $ \{\mathcal{F}_t\} $ progressively measurable and therefore is the optimal control given $ y $.

\item[Step 2:] Solve the following static optimization problem
\begin{equation*}
\inf_{y\in\mathbb{R}} x_0y-1-\log y - E\left[ \int_0^T r(t)+\delta_{K}(\hat{v}(t))+\frac{1}{2}\| \theta(t)+\sigma(t)v(t) \|^2 dt \right].
\end{equation*}
We obtain $ \hat{y}=\frac{1}{x_0} $. Hence, the optimal state process for the dual problem is the exponential process satisfying \eqref{dual_state_process}.
\end{description}
Solving the adjoint BSDE \eqref{dual_adjoint_BSDE}, we have
\begin{equation}
\hat{p}_2(t)Y^{(\hat{y},\hat{v})}(t)=E\left[ -Y^{(\hat{y},\hat{v})}(T)\tilde{U}\left(Y^{(\hat{y},\hat{v})}(T)\right) \bigg |\mathcal{F}_t\right]=1.
\end{equation}
Hence, we have $  \hat{p}_2(t) = Y^{(\hat{y},\hat{v})}(T)^{-1}$. Applying Ito's formula on $ \hat{p}_2 $, we have
\begin{equation*}
\hat{q}_2(t)=Y^{(\hat{y},\hat{v})}(t)^{-1}[\theta(t)+\sigma(t)^{-1}\hat{v}(t)]\textit{ for }\forall t\in[0,T], \textit{ a.e}.
\end{equation*}
Finally, according to Theorem \ref{thm_dual_primal}, we construct the optimal control to the primal problem explicitly form the optimal solution of the dual problem as
\begin{equation}\label{eglogprimalcontrol}
\hat{\pi}(t)=[\sigma(t)\sigma^\intercal(t)]^{-1}\left[ \hat{v}(t)+b(t)-r(t)\mathbf{1} \right]\textit{ for }\forall t\in[0,T], \textit{ a.e}.
\end{equation}
\begin{remark}
In the case where $ K $ is a closed convex cone, it is trivial to see that $ \delta_{K}(\hat{v}(t))=0 $ for $ \forall t\in[0,T] $. Then the pointwise minimization problem \eqref{egpointwisemin_1} becomes a simple constrained quadratic minimization problem
\begin{equation*}
\hat{v}(t)\triangleq \argmin_{v\in\tilde{K}}\| \theta(t)+\sigma(t)^{-1}v \|^2, \forall t\in[0,T].
\end{equation*}
Furthermore, in the case where $ K=\mathbb{R}^N $ and $ \hat{v}=0 $, the optimal control \eqref{eglogprimalcontrol} reduces to $ \hat{\pi}(t)=[\sigma(t)\sigma^\intercal(t)]^{-1}\left[b(t)-r(t)\mathbf{1} \right] \textit{ for }\forall t\in[0,T]$, and we recover the unconstrained log utility maximization problem discussed in \cite{karatzas:optmtrading}.
\end{remark}
\begin{remark}
From the above two examples, we contrast our method to the approach in \cite{cvitanickaratzas:convexdual,klsx:dual,karatzasshreve:mathfinance}, which rely on the introduction of a family of auxiliary unconstrained problems formulated in auxiliary markets parametrized by money market and stock mean return rates \cite[see Section 8]{cvitanickaratzas:convexdual}. The existence of a solution to the original problem is then equivalent to finding the fictitious market that provides the correct optimal solution to the primal problem. On the other hand, we explicitly write our the dual problem to the original constrained problem only relying on elementary convex analysis results and characterize its solution in terms of FBSDEs. The dynamic relationship between the primal and dual FBSDEs then allows us to explicitly construct optimal solution to the primal problem from that to the dual problem.
\end{remark}
\subsection{Constrained Non-HARA Utility Maximization}
In this subsection, we assume $ U $ is a Non HARA utility function defined by $ U(x)=\frac{1}{3}H(x)^{-3}+H(x)^{-1}+xH(x) $ for $x>0$, where $ H(x)=\left(\frac{2}{-1+\sqrt{1+4x}}\right)^{\frac{1}{2}} $. The dual function of $ U $ is defined as $ \tilde{U}\triangleq\sup_{x>0}[U(x)-xy]=\frac{1}{3}y^{-3}+y^{-1},\ y\in [0,\infty).$ Assume that $ K\subseteq \mathbb{R}^N $ is a closed convex cone and $ r,b,\sigma $ are constants. Hence, the dual problem becomes
\begin{equation*}
\textit{Minimize }x_0y+E\left[ \frac{1}{3}\left( Y^{(y,v)}(T) \right)^{-3}+\left( Y^{(y,v)}(T) \right)^{-1} \right] \textit{ over }(y,x)\in (0,\infty)\times \mathcal{D}.
\end{equation*}
We solve the above problem in two steps: first, fix $ y $ and find the optimal control $ \tilde{v}(y) $; second, find the optimal $ \hat{y} $. We can then construct the optimal solution explicitly.
\begin{description}
\item [Step 1:] Consider the associated HJB equation:
\begin{equation}\label{non_hara_utility_HJB}
\left\{
\begin{array}{l}
v_t(s,y)-ryv_y(s,y)+\frac{1}{2}\inf_{v\in \tilde{K}}\| \sigma^{-1}v+\theta \|^2y^2v_{yy}(s,y)=0,\textit{ where } (s,y)\in (0,T)\times[0,\infty),\vspace{2mm}\\
v(T,y)=\frac{1}{3}y^{-3}+y^{-1},
\end{array}
\right.
\end{equation}
for each $ (s,y)\in[t,T]\times[0,\infty) $. Let $ \hat{v} $ be the minimizer of $ \inf_{v\in\tilde{K}}| \theta+\sigma^{-1}v |^2 $ and $ \hat{\theta}\triangleq\theta+\sigma^{-1}\hat{v} $. Define $ w(\tau,y)\triangleq v(s,y) $ with $ \tau=T-s $. We have $ w $ solves the following PDE:
\begin{equation}\label{non_hara_utility_HJB}
\left\{
\begin{array}{l}
w_t(\tau,y)+ryw_y(\tau,y)-\frac{1}{2}\hat{\theta}^2y^2w_{yy}(\tau,y)=0,\textit{ where } (\tau,y)\in (0,T)\times(0,\infty),\vspace{2mm}\\
w(0,y)=\frac{1}{3}y^{-3}+y^{-1},
\end{array}
\right.
\end{equation}
Next, we follow the approach in \cite{bianzheng:JEDC}, we solve the  above PDE. Let $ \alpha=\frac{1}{2}+\frac{r}{\hat{\theta}^2}, \ a=\frac{1}{\sqrt{2}}\hat{\theta}, \beta=-a^2\alpha^2 $, and $\hat{w}(s,z)=e^{-az+\beta s}w(t,e^{z}) $ , then $ \hat{w} $ solves the heat equation $ \hat{w}_t-a^2\hat{w}_{zz}=0 $ and has  the initial condition $ \hat{w}(0,z)=e^{-az}\left( \frac{e^{-3z}}{3}+e^{-z} \right) $. Using Poisson's formula to find $ w(s,z) $ and $ v(s,y) $, we have
\begin{equation*}
v(s,y)=\frac{1}{3}y^{-3}e^{3r(T-s)+6\hat{\theta}^2(T-s)}+\frac{1}{y}e^{r(T-s)+\hat{\theta}^2(T-s)}.
\end{equation*}
\item[Step 2:] Considering the following static optimization problem:
\begin{equation}\label{non_hara_static}
\inf_{y\in(0,\infty)} x_0y+\frac{1}{3}y^{-3}e^{3rT+6\hat{\theta}^2T}+\frac{1}{y}e^{rT+\hat{\theta}^2T}.
\end{equation}
Solving \eqref{non_hara_static}, we have
\begin{equation}
-\hat{y}^{-4}e^{3rT+6\hat{\theta}^2T}-\hat{y}^{-2}e^{rT+\hat{\theta}^2T}+x_0=0.
\end{equation}
Hence, we have $ \hat{y} = \frac{1}{\sqrt{2x_0}}\left[ e^{(r+\hat{\theta}^2)T}+\sqrt{e^{2(r+\hat{\theta}^2)T}+4x_0e^{3(r+2\hat{\theta}^2)T}} \right]^{\frac{1}{2}}$,
and the optimal state process for the dual problem is given by
\begin{equation}\label{non_hara_optimal_Y}
\hat{Y}(t)=\frac{1}{\sqrt{2x_0}}\left[ e^{(r+\hat{\theta}^2)T}+\sqrt{e^{2(r+\hat{\theta}^2)T}+4x_0e^{3(r+2\hat{\theta}^2)T}} \right]^{\frac{1}{2}}e^{(r-\frac{\hat{\theta}^2}{2})t+\hat{\theta}W(t)}.
\end{equation}
\end{description}
Solving the adjoint BSDE, we have
\begin{align*}
\hat{p}_2(t)\hat{Y}(t)&=E\left[ \hat{Y}(T)^{-3}+\hat{Y}(T)^{-1}|\mathcal{F}_t \right]\\
&= \hat{y}^{-3}e^{-3(r-\frac{\hat{\theta}^2}{2})T}e^{-3\hat{\theta}W(t)}e^{\frac{9}{2}\hat{\theta}^2(T-t)}+\hat{y}^{-1}e^{-(r-\frac{\hat{\theta}^2}{2})T}e^{-\hat{\theta}W(t)}e^{\frac{1}{2}\hat{\theta}^2(T-t)}
\end{align*}
Substituting \eqref{non_hara_optimal_Y} back into the above equation and rearranging, we have
\begin{equation}\label{non_hara_p_2}
\hat{p}_2(t)=\hat{y}^{-4}e^{-3(r+2\hat{\theta}^2)T}e^{-rt-4\hat{\theta}^2t-4\hat{\theta}W(t)}+\hat{y}^{-1}e^{-rT}e^{-rt-2\hat{\theta}W(t)}
\end{equation}
Applying Ito's formula, we have
\begin{equation*}
d\hat{p}_2(t)=\left[ -r\hat{p}_2(t)+4a_1\hat{\theta}^2S_1(t)+2a_2\hat{\theta}^2S_2(t) \right]dt-\left(4a_1\hat{\theta}S_1(t)+2a_2\hat{\theta}S_2(t)\right)dW(t),
\end{equation*}
where $ a_1= \hat{y}^{-4}e^{-3(r+\hat{\theta}^2)T},\ a_2=\hat{y}^{-1}e^{-rT} , S_1(t)=e^{-rt-4\hat{\theta}^2t-4\hat{\theta}W(t)}$ and $S_2(t)=e^{-rt-2\hat{\theta}W(t)}$ for $ t\in [0,T] $. We have
\begin{equation}\label{non_hara_q_2}
\hat{q}_2(t)=-4a_1\hat{\theta} S_1(t)-2a_2\hat{\theta} S_2(t), \ t\in[0,T].
\end{equation}
Finally, according to Theorem \ref{thm_dual_primal}, we can construct the optimal solution of the primal problem explicitly from optimal solution to the dual problem as
\begin{equation*}
\left\{
\begin{array}{l}
\hat{\pi}(t)=[\sigma^\intercal]^{-1}\hat{q}_2(t)\hat{p}_2^{-1}(t),\vspace{1mm}\\
X^{\hat{\pi}}(t)=\hat{p}_2(t)=\hat{y}^{-4}e^{-3(r+\hat{\theta}^2)T}e^{-rt-4\hat{\theta}^2t-4\hat{\theta}W(t)}+\hat{y}^{-1}e^{-rT}e^{-rt-2\hat{\theta}W(t)}.
\end{array}
\right.
\end{equation*}
\begin{remark}
Suppose that after attaining $ \hat{y} $ and $ v $, we try to recover the optimal solution to the primal problem directly. By duality relationship between the primal and dual value functions \cite[see Theorem 2.6]{bianzheng:JEDC}, we have
\begin{equation*}
u(t,x)=v(t,\hat{y}(x))+v_y(t,\hat{y}(x))\hat{y}(x)=\frac{2}{3}\left( \hat{y}(x)^{-1}e^{(r+\hat{\theta}^2)t}+2x\hat{y}(x) \right).
\end{equation*}
Hence, to get $ (\hat{\pi},X^{\hat{\pi}}) $, we would need to solve the following optimization problem on the Hamiltonian function:
\begin{equation*}
\hat{\pi}(t)=\argmin_{\pi\in K} \left[ \left( r(t)+\pi'\sigma(t)\theta(t) \right)u_x(t,x) + \frac{1}{2}tr\left( \sigma\sigma' u_{xx}(t,x) \right)\right].
\end{equation*}
and substituting the above back to the SDE \eqref{primal_state_process}, which appears to be highly complicated equation to solve. However, in the approach we proposed, the optimal adjoint processes of the dual problem can be written out explicitly as conditional expectations of the dual state process. The optimal solution to the primal problem can be constructed explicitly thanks to the dynamic relationship as stated in Theorem \ref{thm_dual_primal}.
\end{remark}
\section{Conclusions}
In this paper, we study constrained utility maximization problem following the convex duality approach. After formulating the primal and dual problems, we construct the necessary and sufficient conditions for both the primal and dual problems in terms of FBSDEs plus additional conditions. Such formulation then allows us to establish an explicit connection between primal and dual optimal solutions in a dynamic fashion. Finally we solve three constrained utility maximization problems using the dynamic convex duality approach we proposed above.


\begin{thebibliography}{9}
\bibitem{bismut:convexdual} J. M. Bismut, \emph{Conjugate convex functions in optimal stochastic control}, J. Math. Anal. Appl.,  \textbf{44} (1973), pp. 384--404.

\bibitem{bianzheng:JEDC} B. Bian and H. Zheng, \emph{Turnpike Property and Convergence Rate for an Investment Model with General Utility Functions}, J. Econom. Dynam. Control.,  \textbf{51} (2015), pp. 28--49.

\bibitem{cadenillaskaratzas:convexsmp} A. Cadenillas and I. Karatzas, \emph{The stochastic maximum principle for linear convex optimal control with random coefficients}, SIAM J. Control Optim.,  \textbf{33} (1995), pp. 590--624.

\bibitem{coxhuang:diffusion} J.C. Cox and C.F. Huang \emph{Optimal consumption and portfolio policies when asset prices follow a diffusion process}, J. Econom. Theory,  \textbf{49} (1989), pp. 33--83.

\bibitem{coxhuang:variational} J.C. Cox and C.F. Huang \emph{A variational problem arising in financial economics}, J. Math. Econom.,  \textbf{20} (1991), pp. 465--487.

\bibitem{cvitanickaratzas:convexdual} J. Cvitani$\acute{c}$ and I. Karatzas, \emph{Convex duality in constrained portfolio optimization}, Ann. Appl. Probab,  \textbf{2} (1992), pp. 767--818.

\bibitem{hepearson:infinite} H. He and N.D. Pearson, \emph{Consumption and portfolio policies with incomplete markets and short-sale constrains: the infinite-dimensional case}, J. Econom. Theory,  \textbf{54} (1991), pp. 259--304.

\bibitem{hepearson:finite} H. He and N.D. Pearson, \emph{Consumption and portfolio policies with incomplete markets and short-sale constrains: the finite-dimensional case}, Math. Finance,  \textbf{3} (1991), pp. 1--10.

\bibitem{horst:fbsde} U. Horst, Y. Hu, P. Imkeller, A. Reveillac and J. Zhang, \emph{Forward-backward systems for expected utility maximization}, Stochastic Process. Appl.,  \textbf{124} (2014), pp. 1813--1848.

\bibitem{karatzas:optmtrading} I. Karatzas, \emph{Optimization problems in the theory of continuous trading}, SIAM J. Control Optim.,  \textbf{27} (1989), pp. 1221--1259.

\bibitem{kls:smallinvestor} I. Karatzas, J.P. Lehoczky and S.E. Shreve, \emph{Optimal portfolio and consumption decisions for a ``small'' investor on a finite horizon}, SIAM J. Control Optim., \textbf{25} (1987), pp. 1557--1586.

\bibitem{klsx:dual} I. Karatzas, J.P. Lehoczky, S.E. Shreve and G.L. Xu, \emph{Martingale and duality methods for utility maximization in an incomplete market}, SIAM J. Control Optim., \textbf{29} (1991), pp. 702--731.

\bibitem{karatzasshreve:mathfinance}
I. Karatzas and S. E. Shreve, \emph{Methods of Mathematical Finance}, Springer, 2001.

\bibitem{heunislabbe:constrainedMV} C. Labb\'{e} and A. J. Heunis, \emph{Convex duality in constrained mean-variance portfolio optimization}, Adv. in Appl. Probab.,  \textbf{39} (2007), pp. 77--104.

\bibitem{heunislabbe:constrainedUtility} C. Labb\'{e} and A. J. Heunis, \emph{Convex duality in problems of constrained utility maximization}, Stochastics.,  \textbf{81} (2009), pp. 545--565.

\bibitem{merton:lifetime} R.C. Merton, \emph{Lifetime portfolio selection under uncertainty: the continuous-time case}, Rev. Econom. Statist.,  \textbf{51} (1969), pp. 247--257.

\bibitem{merton:consumption} R.C. Merton, \emph{Optimal consumption and portfolio rules  in a continuous-time model}, J. Econom. Theory,  \textbf{3} (1971), pp. 373--413.

\bibitem{peng:smp} S. Peng, \emph{A General Stochastic Maximum Principle for Optimal Control Problems}, SIAM J. Control Optim., \textbf{28} (1990), pp. 966--979.

\bibitem{pliska:optimalportfolio} S.R. Pliska, \emph{A stochastic calculus model of continuous-time trading: optimal portfolios}, Math. Oper. Res.,  \textbf{11} (1986), pp. 371--382.

\bibitem{santacroce:fbsdesemi} M. Santacroce and B. Trivellato, \emph{Forward backward semimartingale systems for utility maximization}, SIAM J. Control Optim.,  \textbf{52} (2014), pp. 3517--3537.

\bibitem{schal:aselectionthem} M. Sch\"{a}l, \emph{A selection theorem for optimization problems}, Arch. Math.,  \textbf{25} (1974), pp. 219--224.

\bibitem{schal:condoptm} M. Sch\"{a}l, \emph{Conditions for optimality in dynamic programming, and for the limit of n-state optimal policies to be optimal}, Z. Wahrsch. Verw. Geviete,  \textbf{32} (1975), pp. 179--196.

\bibitem{xushreve:dualmethod} G. L. Xu and S. E. Shreve, \emph{A duality method for optimal consumption and investment under short-selling prohibition. I. General market coefficients}, Ann. Appl. Probab.,  \textbf{2} (1992), pp. 87--112.

\end{thebibliography}

\end{document}